\newtheorem{Proposition}{Proposition}
\newtheorem{Remark}[Proposition]{Remark}
\newtheorem{Corollary}[Proposition]{Corollary}
\newtheorem{Lemma}[Proposition]{Lemma}
\newtheorem{Theorem}[Proposition]{Theorem}
\def\z{\noindent}
\def\z{\noindent}
\def\sqr#1#2{{\vcenter{\vbox{\hrule height .#2pt
\hbox{\vrule width .#2pt height#1pt \kern#1pt
\vrule width .#2pt}
\hrule height .#2pt}}}}
\def\NN{\mathbb{N}}
\def\RR{\mathbb{R}}
\def\ZZ{\mathbb{Z}}
\def\iB{\mathcal {B}}
\begin{document}

\title[Ionization in damped time-harmonic fields]{Ionization in damped time-harmonic fields}
\author{O. Costin, M. Huang, Z. Qiu}

\
\maketitle

\bigskip

\begin{abstract}
  We study the asymptotic behavior of the wave function in a simple
  one dimensional model of ionization by pulses, in which the
  time-dependent potential is of the form
  $V(x,t)=-2\delta(x)(1-e^{-\lambda t} \cos\omega t)$, where $\delta$
  is the Dirac distribution.

  We find the ionization probability in the limit $t\to\infty$ for all
  $\lambda$ and $\omega$. The long pulse limit is very singular, and,
  for $\omega=0$, the survival probability is $const\,\lambda^{1/3}$,
  much larger than  $O(\lambda)$, the one in the  abrupt transition counterpart,
$V(x,t)=\delta(x)\mathbf{1}_{\{t\ge 1/\lambda\}}$
   where $\mathbf{1}$ is the
  Heaviside function.
\bigskip

\end{abstract}

\section{Introduction}
Quantum systems subjected to external time-periodic fields which are not small
have been studied in various settings.

In constant amplitude small enough oscillating fields, perturbation
theory typically applies and ionization is generic (the probability of
finding the particle in any bounded region vanishes as time becomes
large).

For larger time-periodic fields, a number of rigorous results have
been recently obtained, see \cite{Hydrogen} and references therein,
showing generic ionization. However, outside perturbation theory,
the systems show a very complex, and often nonintuitive behavior. The
ionization fraction at a given time is not always monotonic with the field
\cite{UAB}.  There even exist exceptional potentials of the form
$\delta(x)(1+a F(t))$ with $F$ periodic and of zero average, for which ionization occurs
for all small $a$, while at larger fields the particle becomes
confined once again \cite{CMP}.  Furthermore, if $\delta(x)$  is replaced with smooth potentials $f_n$ such that $f_n\to\delta$
in distributions, then ionization occurs for all $a$ if
$n$ is kept fixed.

 Numerical approaches are very delicate since one deals with the Schr\"odinger
equation in $\RR^n\times \RR^+$, as $t\to \infty$ and
artefacts such as reflections from the walls of a large box approximating the
infinite domain are not easily suppressed.  The mathematical study of systems in
various limits is delicate and important.

In physical experiments one deals with forcing of finite effective duration, often with
exponential damping.
This is the setting we study in the present paper, in a simple model,  a delta function in one
dimension,  interacting with a damped
time-harmonic external forcing. The equation is
\begin{equation}
  \label{eq:eqa} i\, \frac{\partial\psi }{\partial t}\,
  =\Big(-\frac{\partial^2}{\partial x^2} -
  2\delta(x)\left(1-A(t) \cos(\omega t)\right) \Big)\, \psi
\end{equation}
where  $A(t)$  is the amplitude of the oscillation; we take
\begin{equation}
  \label{eq:eqF}
  \psi_{0}=\psi(0,x)\in C_{0}^{\infty};\ \ A(t)=\alpha e^{-\lambda t};\ \ \alpha=1
\end{equation}
 (The analysis for other values of $\alpha$ is very similar.)
The quantity of interest is the large $t$ behavior of $\psi$, and
in particular the survival probability
\begin{equation}
    \label{eq:ioniz}
P_B=\lim_{t\to\infty} P(t,{B})=   \lim_{t\to\infty}\int_{B}|\psi(t,x)|^2dx
  \end{equation}
 where $B$ is a bounded subset of $\RR$.

{\em Perturbation theory, Fermi Golden Rule}. If $\alpha$ is small enough,
$P$ decreases exponentially on an intermediate time scale, long
enough so that by the time the behavior is not exponential anymore, the survival
probability is too low to be of physical interest. For all practical
purposes, if $\alpha$ is
small enough,
the decay is exponential, following the Fermi Golden Rule, the derivation
of which can be found in most quantum mechanics textbooks; the quantities of interest can be obtained by perturbation expansions in $\alpha$. This setting
is well understood; we mainly focus on the case where $\alpha$ is not too
small,  a toy-model of an atom interacting with
a  field comparable to the binding potential.

{\em No damping.} The case $\lambda=0$ is well understood for the
model (\ref{eq:eqa}) in all ranges of $\alpha$, see \cite{Rokhl}. In
that case, $P(t,A)\sim t^{-3}$ as $t\to\infty$.

However, since the limit $\lambda\to 0$ is singular, little
information can be drawn from the $\lambda=0$ case.

For instance, if $\omega=0$, the limiting
value of $P$ is of order $\lambda^{1/3}$, while
with an abrupt cutoff, $A(t)=\mathbf{1}_{\{t:t\le 1/\lambda\}}$, the limiting $P$ is $O(\lambda)$ (as usual, $\mathbf{1}_S$ is the characteristic
function of the set $S$).

Thus, at least for fields which are not very small, the shape of the
pulse cut-off is important. Even the simple system (\ref{eq:eqa})
exhibits a highly complex behavior.

We obtain a rapidly convergent expansion of the wave function and
the ionization probability for any frequency and amplitude; this can
be conveniently used to calculate the wave function with rigorous
bounds on errors, when the exponential decay rate is not extremely
large or small, and the amplitude is not very large. For some
relevant values of the parameters we plot the ionization fraction
as a function of time.

We also show that for $\omega=0$ the equation
is solvable in closed form, one of the few nontrivial integrable examples of the time-dependent Schr\"odinger
equation.

\section{Main results} \label{sec:setting_and_results}
\begin{Theorem} \label{thm:1}
Let $\psi(t,x)$ be the solution of  (\ref{eq:eqa}) with initial condition $\psi_{0}\in C_{0}^{\infty}$. Let
\begin{equation} \label{eq:defOfgmn}
    g_{m,n} = g_{m,n}(\sigma) = \frac{i}{2}\, \int_{\RR}  e^{-\sqrt{\sigma+n\omega-im\lambda}|x'|} \psi_0(x')dx'
\end{equation}
Then as $t\rightarrow\infty$ we have
\smallskip

\begin{equation}
    \psi(t,x) = r(\lambda,\omega)e^{i t} e^{-|x|}\left(1+t^{-1/2}h(t,x) \right)
\end{equation}
where $|h(t,x)| \leq C$, $\forall x\in \RR$, $\forall t\in \RR^{+}$, and where
\begin{equation} \label{eq:rInA}
    r(\lambda,\omega)=\left[ -A_{1,-1}-A_{1,1}+2g_{0,0}\right]_{\sigma=1}
\end{equation}
\smallskip

\z where $A_{m,n}=A_{m,n}(\sigma)$ solves
\begin{equation} \label{eq:inhomoInA1}
    (\sqrt{\sigma + n\omega - im\lambda}-1)A_{m,n}
    = -\frac{1}{2} A_{m+1,n+1} - \frac{1}{2} A_{m+1,n-1} + g_{m,n}
\end{equation}
 There is a unique solution  of (\ref{eq:inhomoInA1})
satisfying
\begin{equation}\label{eq:norm_in_A_inhomo}\sum_{m,n}(1+|n|)^{\frac{3}{2}}
e^{-b\sqrt{1+|m|}} |A_{m,n}| < \infty
\end{equation} where $b>1$ is a
constant. It is this solution that enters (\ref{eq:rInA}).

\end{Theorem}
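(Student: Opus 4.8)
The plan is to pass to the Laplace transform in time, $\hat\psi(p,x)=\int_0^\infty e^{-pt}\psi(t,x)\,dt$, and exploit the fact that a $\delta$-interaction couples the dynamics only through the boundary value $\hat\psi(p,0)$. Transforming (\ref{eq:eqa}) turns it into the $x$-ODE $\hat\psi_{xx}+ip\,\hat\psi=i\psi_0-2\delta(x)\widehat{a\psi}$ with $a(t)=1-e^{-\lambda t}\cos\omega t$. Since $a(t)\psi$ carries the factors $e^{(-\lambda\pm i\omega)t}$, its transform produces the shifts $\widehat{a\psi}(p,0)=\hat\psi(p,0)-\tfrac12\hat\psi(p+\lambda-i\omega,0)-\tfrac12\hat\psi(p+\lambda+i\omega,0)$. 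Writing the decaying solution as $\hat\psi=\hat\psi_{\mathrm{free}}+\widehat{a\psi}(p,0)\,e^{-\sqrt{-ip}\,|x|}/\sqrt{-ip}$ and imposing the jump condition at $x=0$ yields a closed relation among the boundary values. With the substitution $\sigma=-ip$ the shifts $p\mapsto p+\lambda\pm i\omega$ become $\sigma\mapsto\sigma\pm\omega-i\lambda$; iterating them generates the lattice indexed by $(m,n)$, and setting $A_{m,n}(\sigma)=\hat\psi(p,0)$ evaluated at $-ip=\sigma+n\omega-im\lambda$ turns the boundary relation into exactly (\ref{eq:inhomoInA1}), with $g_{m,n}$ as in (\ref{eq:defOfgmn}) arising from $\hat\psi_{\mathrm{free}}$. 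The bound state $e^{-|x|}$ of $-\partial_x^2-2\delta$ has energy $-1$, so I expect a pole of $A_{0,0}$ at $\sigma=1$; multiplying the $(m,n)=(0,0)$ equation by $\sqrt\sigma-1$ and letting $\sigma\to1$ identifies $\mathrm{Res}_{\sigma=1}A_{0,0}$ with $r(\lambda,\omega)$ as in (\ref{eq:rInA}).

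Next I would solve (\ref{eq:inhomoInA1}) by recasting it as a fixed-point equation $A=\mathcal K A+G$, where $\mathcal K$ divides the coupling term by $\sqrt{\sigma+n\omega-im\lambda}-1$ and $G_{m,n}=g_{m,n}/(\sqrt{\sigma+n\omega-im\lambda}-1)$, in the weighted $\ell^1$ space whose norm appears in (\ref{eq:norm_in_A_inhomo}). The essential mechanism is that each application of $\mathcal K$ raises $m$ by one while dividing by a factor that grows like $\sqrt{m\lambda}$ as $m\to\infty$; the weight $e^{-b\sqrt{1+|m|}}$ with $b>1$ is tuned so that the ratio of consecutive weights is dominated by this gain, making $\mathcal K$ a contraction for $\sigma$ bounded away from the branch locus. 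The polynomial factor $(1+|n|)^{3/2}$ controls the $\pm1$ diffusion in $n$ against the $|n|^{1/2}$ growth of $\sqrt{\sigma+n\omega-im\lambda}$. This produces a unique solution in the stated space, analytic in $\sigma$ off the spectral cut issuing from $\sigma=0$, whose only pole in the relevant region is the simple one at $\sigma=1$, so that $r(\lambda,\omega)$ is well defined.

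Finally I would recover $\psi$ by inverting the Laplace transform along a Bromwich contour $\mathrm{Re}\,p=c>0$ and deforming it to the left. The only singularity crossed is the simple pole at $p=i$ (that is $\sigma=1$); its residue contributes $r(\lambda,\omega)e^{it}e^{-|x|}$, the surviving bound state. The remaining contour collapses onto the branch cut of $\sqrt{-ip}$ emanating from $p=0$, and a Watson-type expansion of the integrand near this branch point produces the leading algebraic correction of size $t^{-1/2}$; keeping the factor $e^{-\sqrt{-ip}\,|x|}\to e^{-|x|}$ pulled out gives the representation $\psi=r\,e^{it}e^{-|x|}(1+t^{-1/2}h)$ with the claimed remainder.

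The main obstacle I anticipate is twofold. First, the bound $|h(t,x)|\le C$ must hold uniformly in $x$ over all of $\RR$, not merely on compact sets; near the branch point the prefactor carries the $x$-dependence through $e^{-\sqrt{-ip}\,|x|}$, and one must show that dividing by $e^{-|x|}$ leaves the branch-cut integral uniformly bounded as $|x|\to\infty$, which requires careful control of the contour near $p=0$ together with the analyticity and decay of the $A_{m,n}$ established above. Second, the deformation hinges on the spectral claim that, once the damping $\lambda>0$ pushes the shifted arguments into the region of decay, no resonance pole other than $\sigma=1$ lies to the right of the cut; verifying this, that the pole is simple with residue exactly $r(\lambda,\omega)$, and justifying the interchange of Laplace inversion with the infinite sum over $(m,n)$ is where the real work lies. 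The contraction estimate of the second step, and in particular the threshold $b>1$, is the technical linchpin that makes the whole argument rigorous.
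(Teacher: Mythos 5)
Your outline follows the same route as the paper: Laplace transform, reduction to the lattice recursion (\ref{eq:inhomoInA1}) for the boundary values $A_{m,n}=\hat\psi(p,0)$, solution in a weighted sequence space, and Bromwich inversion in which the residue at $p=i$ produces the bound-state term and the branch point at $p=0$ produces the $t^{-1/2}$ correction via Watson's lemma. The derivation of (\ref{eq:inhomoInA1}) and of the residue formula (\ref{eq:rInA}) is correct. Two of the steps you describe, however, are not merely technical and contain genuine gaps as stated.

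First, the contraction mechanism is wrong as formulated. The map $\mathcal K$ divides by $\sqrt{\sigma+n\omega-im\lambda}-1$, and this factor is not uniformly large: it is $O(1)$ for bounded $m$ and it \emph{vanishes} on the resonance set $\iB$ --- in particular at $(m,n)=(0,0)$, $\sigma=1$, which is not on the branch locus --- so $\mathcal K$ fails to contract exactly where the theorem must be proved, and your hypothesis ``$\sigma$ bounded away from the branch locus'' does not exclude these points. What actually makes the scheme converge is that the $N$-fold products $\prod_{j}\bigl(\sqrt{\sigma+\cdot-i(m+j)\lambda}-1\bigr)^{-1}$ decay like $(N!)^{-1/2}$, so a high iterate of $\mathcal K$ contracts even though no single step does; this is the content of the sum representation (\ref{eq:sumRep}), and the recursion for $A_{1,\pm1}$ only visits indices $m\ge1$, where the denominators are bounded below because $\lambda>0$. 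Correspondingly, $b>1$ is not the ``linchpin'': the paper sidesteps any contraction estimate by working with the full equation (\ref{eq:vector2}) in $\mathscr H_b$, where $\mathfrak{C}$ is compact, and invoking the Fredholm alternative.

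Second, and more seriously, you defer precisely the statement that carries the theorem: that no singularity other than $p=i$ lies on or to the right of the imaginary axis, and that the pole at $p=i$ is actually present and simple. The paper proves the first part in Proposition \ref{prop:A} by showing that any nontrivial solution of the homogeneous recursion (\ref{eq:homoInA}) with $\sigma\neq1$ (or supported off $\iB$) must grow like $\sqrt{m!}$ along a chain of indices, contradicting (\ref{eq:norm_in_A_inhomo}); the second part requires showing $-A_{1,-1}-A_{1,1}+2g_{0,0}\neq0$, done for large $\lambda$ in Proposition \ref{prop:psiHasAPole} together with Lemma \ref{lemma:betterFredholm}. Without these arguments the contour deformation is unjustified. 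You should also note the subtraction $y=y_1+h_1$ of (\ref{eq:changeOfVariable}), which makes the inhomogeneity $O(p^{-3/2})$; some such device is needed before the deformation and the Watson's-lemma estimate for $t^{-1/2}h(t,x)$, uniformly in $x$, can be carried out.
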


There is a rapidly convergent representation of $r(\lambda, \omega)$, see \S \ref{subsec:sumRepOfAmn}.

Clearly, $|r(\lambda,\omega) |^{2}$ is the probability of survival,
the projection onto the limiting bound state.
\subsection{$\omega=0$}
 \begin{Theorem} \label{thm:3}
    (i) For $\omega=0$ we have
    \begin{equation}  \label{eq:inhomogeneousMainsys2}
        \begin{split}
            r(\lambda) = \int_{0}^{\infty} \frac{-e^{-p}}{1+e^{-p}} \int_{c-i\infty}^{c+i\infty} g(k) \exp \Big(\frac{2\,i\,\sqrt{-i\,k}}{\sqrt{\lambda}} \Big) \lambda^{\frac{1-k}{2}} \frac{1} {\sqrt{\Gamma(k)}} \cdot \\
            \exp \Big( -\int_{0}^{\infty} e^{-kp} \frac{\sqrt{i\,} \lambda \left( -2 + 2 e^{-p}   - i^{3/2}\sqrt{p\, \pi\,\lambda}\,{\rm{erf}}\,(\frac{-i^{3/2}\sqrt{p}} {\sqrt{\lambda}})\right)}  {2\,(-1+e^{-p})\,\sqrt{\pi}\,(p \lambda)^{3/2}} dp \Big) dk dp\\
        \end{split}
    \end{equation}
where $g(k)=g_{k,0}$.

(ii) We look at the case when $\psi_{0}=e^{-|x|}$, the bound state of the limiting
time-independent system. Assuming the series of $r(\lambda)$ is Borel
summable in $\lambda$ for $\arg \lambda \in [0, \frac{\pi}{2}] $
(summability follows from (\ref{eq:inhomogeneousMainsys2}), but the
proof is cumbersome and we omit it), as $\lambda \to 0$ we have
    \begin{equation} \label{eq:inhomogeneousMainsy1}
        r(\lambda) \sim 2^{-2/3}(-3i)^{1/6}\pi^{-1/2}\Gamma(2/3)e^{-\frac{3i}{2\lambda}}\,\lambda^{1/6}
    \end{equation}
\end{Theorem}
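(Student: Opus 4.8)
The plan is to solve the recursion (\ref{eq:inhomoInA1}) explicitly at $\omega=0$, recast the resulting series as the integral (\ref{eq:inhomogeneousMainsys2}), and then extract the small-$\lambda$ law (\ref{eq:inhomogeneousMainsy1}) by a saddle-point analysis in which the saddle collides with a branch point.

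\emph{Closed form, part (i).} At $\omega=0$ both $\sqrt{\sigma-im\lambda}$ and $g_{m,n}=g_m$ are independent of $n$, so the energy decouples from the harmonic index. Writing $v_m=\sqrt{\sigma-im\lambda}-1$ and passing to the Fourier series $A_m(\theta)=\sum_n A_{m,n}e^{in\theta}$ turns (\ref{eq:inhomoInA1}) into the scalar first-order (in $m$) equation $v_m A_m(\theta)+\cos\theta\,A_{m+1}(\theta)=2\pi g_m\delta(\theta)$. Integrating this from the decay that makes (\ref{eq:norm_in_A_inhomo}) finite gives $A_{0,0}=\sum_{j\ge0}(-1)^j g_j/\prod_{l=0}^j v_l$. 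The $m=0$, $n=0$ row of (\ref{eq:inhomoInA1}) is the identity $r=[\,2(\sqrt\sigma-1)A_{0,0}\,]_{\sigma=1}$, and since $v_0=\sqrt\sigma-1$ the bound-state residue at $\sigma=1$ is
\[ r(\lambda)=2\sum_{j\ge0}\frac{(-1)^j g_j}{\prod_{l=1}^j(\sqrt{1-il\lambda}-1)},\qquad g_j=g_{j,0}|_{\sigma=1}. \]
Care is needed here: the naive $\omega=0$ source $\propto\delta(\theta)$ is $n$-independent and hence violates (\ref{eq:norm_in_A_inhomo}); what survives is exactly its contribution to the pole at $\sigma=1$, the regular (decaying-in-$n$) part being analytic there, so the formula above is the $\omega\to0^{+}$ limit of the residue.

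\emph{Integral form, part (i) continued.} To reach (\ref{eq:inhomogeneousMainsys2}) I would use $\sqrt{1-il\lambda}-1=-il\lambda/(1+\sqrt{1-il\lambda})$, so that $\prod_{l=1}^j(\sqrt{1-il\lambda}-1)=(-i\lambda)^j\,\Gamma(j+1)\big/\prod_{l=1}^j(1+\sqrt{1-il\lambda})$, which isolates the factorial (hence the $\Gamma$, $\sqrt\Gamma$), the power $\lambda^{j}$ (hence $\lambda^{(1-k)/2}$) and a residual product. The alternating sum is then resummed by a Watson–Sommerfeld/Mellin–Barnes device: the kernel $(1+e^{-p})^{-1}=\sum_{j\ge0}(-1)^je^{-jp}$ reproduces the signs through the outer $p$-integral with weight $-e^{-p}/(1+e^{-p})$, while the Bromwich contour continues $j\mapsto k$. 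Evaluating $\sum_{l=1}^j\log(\sqrt{1-il\lambda}-1)$ by Euler–Maclaurin (equivalently Abel–Plana) and splitting off the singular $\int^{k\lambda}(-is)^{-1/2}ds$ piece produces the factor $\exp(2i\sqrt{-ik}/\sqrt\lambda)$, the finite remainder being precisely the correction $\exp(-\int_0^\infty e^{-kp}(\cdots)dp)$.

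\emph{Asymptotics, part (ii).} For $\psi_0=e^{-|x|}$ the integral (\ref{eq:defOfgmn}) is elementary, $g(k)=i/(1+\sqrt{1-ik\lambda})$, and I would run the $\lambda\to0$ analysis directly on the series (equivalently by steepest descent on (\ref{eq:inhomogeneousMainsys2})). Writing the summand as $e^{E(j)}$ and setting $s=j\lambda$, the $O(1/\lambda)$ phase is $\Phi(s)=s\log(-i)-s\log s+s+\int_0^s\log(1+\sqrt{1-iu})\,du$ (the $\log\lambda$ terms cancel between $\lambda^{-j}$ and Stirling). The saddle condition $(-i)(1+\sqrt{1-is_*})=s_*$ has solution $s_*=-i$, which is exactly the branch point $1-is=0$; there $\Phi(s_*)=-3i/2$, giving $e^{-3i/(2\lambda)}$, and $g$ at the saddle equals $i$. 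Because the saddle sits \emph{on} the branch point, the local expansion is cubic, $\Phi(s)\approx-\tfrac{3i}2+\tfrac23\sqrt{-i}\,(s-s_*)^{3/2}$; the Airy-type scaling $s-s_*=O(\lambda^{2/3})$ then reduces the integral to a cube-root integral $\int e^{c\,\xi^{3/2}}d\xi\propto\Gamma(2/3)$, while the $\lambda$-powers combine as $-1+\tfrac12+\tfrac23=\tfrac16$, yielding the $\lambda^{1/6}\Gamma(2/3)$ and the constant $2^{-2/3}(-3i)^{1/6}\pi^{-1/2}$ of (\ref{eq:inhomogeneousMainsy1}).

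\emph{Main obstacle.} The crux is that the governing saddle coalesces with the branch point of $\sqrt{1-is}$, so this is a turning-point rather than an ordinary Gaussian saddle: justifying the cubic (Airy) reduction, choosing the steepest-descent contour that threads the cut, and pinning the exact prefactor $2^{-2/3}(-3i)^{1/6}\pi^{-1/2}$—into which the $O(1)$ Euler–Maclaurin constant of the residual product and the contour orientation enter—is the delicate step. Controlling $\Gamma(k)^{-1/2}$ and the correction $C(k)$, whose integrand is singular like $p^{-3/2}$ at $p=0$ and must be treated by analytic continuation, uniformly on the Bromwich contour as $\lambda\to0$ is the other technical point, alongside the already-flagged $\omega\to0^{+}$ limit of part (i). The assumed Borel summability is what upgrades the formal saddle expansion to the genuine asymptotics (\ref{eq:inhomogeneousMainsy1}).
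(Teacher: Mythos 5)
Your proposal follows essentially the same route as the paper: the explicit alternating series $\sum_k(-1)^kg_k/\prod_{l\le k}(\sqrt{1-il\lambda}-1)$ for the residue, the factorization isolating $\sqrt{\Gamma(k)}$, $\lambda^{(1-k)/2}$ and the WKB exponential $\exp(\pm 2i\sqrt{-ik}/\sqrt{\lambda})$ followed by resummation against the kernel $-e^{-p}/(1+e^{-p})$ with a Bromwich continuation in $k$, and for (ii) an Euler--Maclaurin/steepest-descent analysis whose critical point degenerates at the branch point $1-is=0$, producing $e^{-3i/(2\lambda)}\Gamma(2/3)\lambda^{1/6}$ with the stated constant. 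Your explicit flagging of the $n$-independence issue at $\omega=0$ (incompatibility with the norm (\ref{eq:norm_in_A_inhomo})) and of the saddle--branch-point coalescence is, if anything, more careful than the paper's own presentation.
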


Note: The behavior (\ref{eq:inhomogeneousMainsy1}) is confirmed numerically with high accuracy, constants included, see \S \ref{subsec:numericalEvidence}.

We also discuss  results in two limiting cases: the short pulse setting (see \S \ref{sec:shortPulse}) and the special case $\lambda=0$ (see \S \ref{sec:anotherSpecialCase}).

\section{Proofs and further results} \label{sec:mainStrategy}

\subsection{The associated  Laplace space equation}

Existence of a strongly continuous unitary propagator for (\ref{eq:eqa})
(see \cite{Reed-Simon} v.2, Theorem X.71) implies that for $\psi_0\in
L^2(\RR^d)$, the Laplace transform
\begin{displaymath}
  \label{eq:Lap}
  \hat{\psi}(\cdot,p):=\int_0^{\infty}\psi(\cdot,t)e^{-pt}dt
\end{displaymath}
exists for $\Re(p)>0$ and the map $p\to \psi(\cdot,p)$ is $L^2$ valued
analytic in the right half plane \begin{displaymath} \label{H}
  p\in\mathbb{H}=\{z:\Re(z)>0\}
\end{displaymath}
The  Laplace transform of (\ref{eq:eqa}) is
\begin{equation} \begin{split}
  \label{eq:inhomo1} \left(\frac{\partial^2}{\partial x^2} +
    ip\right)\hat{\psi}(x,p)=i\psi_0 -
  2\delta(x)\hat{\psi}(x,p)\\
   + \delta(x)\left( \hat{\psi}(x,p - i\omega + \lambda) + \hat{\psi}(x,p + i\omega +
  \lambda) \right)
\end{split} \end{equation}

Let $p=i \sigma+m\lambda+i n\omega$ and
\begin{equation} \label{eq:yandpsi}
    y_{m,n}(x,\sigma)=\hat{\psi}(x,i \sigma+m\lambda+i n\omega)
\end{equation}
where $i\sigma \in \{z: 0\leq\Im z<\omega, 0\leq\Re z<\lambda \}$.

\begin{Remark} \label{rmk:changeOfNotion}
{\em  Since the $p$ plane equation
only links values of $p$
differing by $m\lambda+i n\omega$, $m,n\in\ZZ$,
it is useful to think of  functions of $p$  as
vectors  with components  $m$ and $n$, parameterized by $\sigma$.}
\end{Remark}
\z Thus we rewrite  (\ref{eq:eqa}) as
\begin{multline} \label{eq:inhomo2}
\left(\frac{\partial^2}{\partial x^2} - \sigma -n \omega +i m\lambda \right)y_{m,n}\\
=\ i\psi_0 - 2\delta(x)y_{m,n} + \delta(x)\left(y_{m+1,n+1}+y_{m+1,n-1}\right)
\end{multline}
When $|n|+|m|\neq0$, the resolvent of the operator
$$-\frac{\partial^2} {\partial x^2} + \sigma + n \omega -i m \lambda$$ has the integral representation
\begin{equation} \label {eq:defOfg}
    \Big(\mathfrak{g}_{m,n}f\Big)(x) := \int_{\RR} G(\kappa_{m,n}(x-x')) f(x')dx'
\end{equation} with
\begin{displaymath} \kappa_{m,n}=\sqrt{-ip}=\sqrt{\sigma+n\omega-im\lambda}\
  \end{displaymath} where the choice of branch is so that
 if $p\in\mathbb{H}$, then $\kappa_{m,n}$ is in the fourth quadrant, and where the Green's function is given by
\begin{equation} \label{eq:defOfG}
    G(\kappa_{m,n}x)=\frac{1}{2}\kappa_{m,n}^{-1}e^{-\kappa_{m,n}|x|}
\end{equation}

\begin{Remark} \label{rmk:PropOfg}
     If $f(x)\in C_0^{\infty}$, using integration by parts we have, as $p\to \infty$
    $$\mathfrak{g}(f) \sim \frac{c(x)}{p} + o\left(\frac{1}{p}\right)$$
    where we regard $\mathfrak{g}$ as an operator with $p$ as a parameter; see also Remark \ref{rmk:changeOfNotion}. Furthermore, (\ref{eq:defOfg}) implies $c(x)\in L^{2}$.
\end{Remark}

Define the operator $\mathfrak{C}$ by
\begin{equation} \label{eq:definitionOfC}
    (\mathfrak{C}y)_{m,n}=\mathfrak{g}_{m,n}\left[ 2 \delta(x) y_{m,n} - \delta(x) \left(y_{m+1,n+1}+y_{m+1,n-1}\right) \right]
\end{equation}
Then Eq. (\ref{eq:inhomo2}) can be written in the equivalent integral form
\begin{equation} \label{eq:vector21}
    y=i\mathfrak{g}\psi_0 + \mathfrak{C}y
\end{equation}
where $\mathfrak{g}$ is  defined in  (\ref{eq:defOfg}).

\begin{Remark} \label{rmk:PropOfC}
  Because of the factor $\kappa_{m,n}^{-1}$ in (\ref{eq:defOfG}),
we have, with the identification in Remark \ref{rmk:changeOfNotion},
    $$\mathfrak{C}\phi(p) \sim \frac{c(x)}{\sqrt{p}} \phi(p) $$
as $p\to\infty$, for any function $\phi(p)$.
\end{Remark}

\subsection{Further transformations, functional space}

In this section we assume  $\psi_{0}\in C_{0}^{\infty}$.
As in Remark \ref{rmk:PropOfg},
 we obtain
\begin{equation} \label{eq:igpsi0}
    i\mathfrak{g}\psi_{0} = \frac{c_{1}(x)}{p} + O \left(\frac{1}{p^{3/2}} \right)
\end{equation}
for some $c_{1}(x)\in L^{2}$.

Let
\begin{equation} \label{eq:defOfh}
    h_{1}(p) = h_{1}(x,p) = c_{1}(x) \mathcal{L}\left(  {\bf 1}_{[0,1]}(t) \right)
\end{equation}
 For large $p$ we have
\begin{equation} \label{eq:asymptPropOfh}
    h_{1}(p) = \frac{c_{1}(x)}{p} + O \left(\frac{1}{p^{3/2}} \right)
\end{equation}

\begin{Remark} \label{rmk:invLaplaceOfh1}{\em
    As a function of $x$ , $h_{1}(p)$ is clearly in  $L^{2}$ and $$\mathcal{L}^{-1}(h_{1}(p)) = c_{1}(x) {\bf 1}_{[0,1]}(t)$$ thus for $t>1$ we have $$\mathcal{L}^{-1}(h_{1}(p)) = 0 $$}
\end{Remark}
\z Substituting
\begin{equation} \label{eq:changeOfVariable}
    y = y_{1} + h_{1}
\end{equation}
in (\ref{eq:vector21}) we have
\begin{equation} \label{eq:inNewVariable}
    y_{1} = i\mathfrak{g}\psi_{0} - h_{1} + \mathfrak{C}\left( h_{1} \right) + \mathfrak{C} y_{1}
\end{equation}
Let $y_{0}=i\mathfrak{g}\psi_{0} - h_{1} + \mathfrak{C}\left( h_{1} \right)$, then Remark \ref{rmk:PropOfC} implies that for large $p$
\begin{equation} \label{eq:asymptPropOfInhomo}
    y_{0} = O\left( \frac{1}{p^{3/2}} \right)
\end{equation}
and by construction $y_{0}\in L^{2}$ as a function of $x$.

 We analyze (\ref{eq:inNewVariable}) in the space $\mathscr{H}_b=L^2(\mathbb{Z}^2\times \mathbb{R},\|\cdot\|_b)$, $b>1$, where
\begin{equation} \label{eq:norm_in_y}
    \|y\|_b:=\left(\sum_{m,n}(1+|n|)^{\frac{3}{2}} e^{-b\sqrt{1+|m|}} \|y_{m,n}\|_{L^2}^2 \right)^\frac{1}{2}
\end{equation}
 We denote by $\hat{\psi}_{1}$ the transformed wave
function corresponding to $y_{1}$. Writing $y$ instead of $y_{1}$, we
obtain from (\ref{eq:inNewVariable}),
\begin{equation} \label{eq:vector2}
    y = y_{0} + \mathfrak{C} y
\end{equation}

\begin{Lemma}
\label{remark:C} $\mathfrak{C}$ is a compact operator on
$\mathscr{H}_b$,  and analytic in
$\sqrt{-ip}$. \end{Lemma}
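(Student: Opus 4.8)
The plan is to make the action of $\mathfrak{C}$ fully explicit and then recognize it as a compact multiplier composed with bounded operators. Since $\mathfrak{g}_{m,n}[\delta\, f]=f(0)\,G(\kappa_{m,n}\,\cdot)$ by (\ref{eq:defOfg})--(\ref{eq:defOfG}), a direct computation gives
\begin{equation*}
(\mathfrak{C}y)_{m,n}(x)=\tfrac{1}{2}\kappa_{m,n}^{-1}e^{-\kappa_{m,n}|x|}\bigl(2y_{m,n}(0)-y_{m+1,n+1}(0)-y_{m+1,n-1}(0)\bigr).
\end{equation*}
This exhibits $\mathfrak{C}$ as a composition of four maps: the boundary traces $y_{j,l}\mapsto y_{j,l}(0)$; a finite-band shift-and-combine operator $T$ on $\ZZ^2$ sending the traces at $(m,n)$ and $(m+1,n\pm1)$ into position $(m,n)$; multiplication by the scalar multiplier $\mu_{m,n}=\tfrac12\kappa_{m,n}^{-1}$; and the embedding attaching the fixed profile $e_{m,n}(x)=e^{-\kappa_{m,n}|x|}$ to each component. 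The whole range of $\mathfrak{C}$ thus lies in the subspace spanned componentwise by these profiles, which is the natural invariant subspace carrying the iteration (\ref{eq:vector2}).

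First I would record the elementary identities $\|e_{m,n}\|_{L^2}^2=(\Re\kappa_{m,n})^{-1}$ and, on the profile subspace where $y_{m,n}=a_{m,n}e_{m,n}$, the relations $y_{m,n}(0)=a_{m,n}$ and $|a_{m,n}|=(\Re\kappa_{m,n})^{1/2}\|y_{m,n}\|_{L^2}$; this is the point at which the otherwise unbounded point evaluation becomes bounded. Combining the trace factor $(\Re\kappa_{m,n})^{1/2}$, the multiplier $|\mu_{m,n}|\sim|\kappa_{m,n}|^{-1}$, and the output norm $(\Re\kappa_{m,n})^{-1/2}$ yields a net per-component scaling comparable to $|\kappa_{m,n}|^{-1}$, provided $\Re\kappa_{m,n}/|\kappa_{m,n}|$ stays bounded below — which holds because $\kappa_{m,n}=\sqrt{\sigma+n\omega-im\lambda}$ lies in the fourth quadrant with $\Re\kappa_{m,n}\asymp|\kappa_{m,n}|$ along both lattice directions. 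I would then check that $T$ is bounded on $\mathscr{H}_b$: the weight ratios $w_{m,n}/w_{m+1,n\pm1}$ with $w_{m,n}=(1+|n|)^{3/2}e^{-b\sqrt{1+|m|}}$ are bounded, since $\sqrt{1+|m|}-\sqrt{1+|m+1|}\to0$ and $(1+|n|)/(1+|n\pm1|)\to1$, so the finite-band shift is bounded for every $b>1$.

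For compactness I would use that $|\kappa_{m,n}|=|\sigma+n\omega-im\lambda|^{1/2}\to\infty$ as $|m|+|n|\to\infty$ (for $\omega\neq0$; the degenerate $\omega=0$ case is solved separately in Theorem \ref{thm:3}), so $\mu_{m,n}\to0$ at infinity, consistent with Remark \ref{rmk:PropOfC}, which already records the $p^{-1/2}$ smallness of $\mathfrak{C}$. Hence the diagonal multiplier $M_\mu$ is compact, being the operator-norm limit of its finite truncations with $\|M_\mu-M_\mu^{(N)}\|\le\sup_{|m|+|n|>N}|\mu_{m,n}|\to0$. Since $\mathfrak{C}$ is $M_\mu$ sandwiched between bounded maps, it is compact; equivalently, the finite-rank truncations $\mathfrak{C}_N$ (restricting $|m|,|n|\le N$) satisfy $\|\mathfrak{C}-\mathfrak{C}_N\|\to0$ by the same tail estimate, which is the cleanest way to present it. Analyticity in $\sqrt{-ip}$ then follows: each entry $\tfrac12\kappa_{m,n}^{-1}e^{-\kappa_{m,n}|x|}$ is analytic in $\kappa_{m,n}=\sqrt{-ip}$ on $\{\Re\kappa>0\}$, so every $\mathfrak{C}_N$ is operator-valued analytic, and the locally uniform operator-norm convergence $\mathfrak{C}_N\to\mathfrak{C}$ transfers analyticity to the limit (Morera applied to the scalar maps $\langle u,\mathfrak{C}(\cdot)v\rangle$).

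I expect the main obstacle to be the point-evaluation subtlety isolated in the second paragraph: $y\mapsto y_{m,n}(0)$ is unbounded on $L^2(\RR)$, so the proof must commit to the profile subspace and track the compensating factors $(\Re\kappa_{m,n})^{\pm1/2}$ carefully, verifying that they combine with $\mu_{m,n}$ and the weight $w_{m,n}$ to produce genuine decay rather than growth. Getting the interplay of the $(1+|n|)^{3/2}$ and $e^{-b\sqrt{1+|m|}}$ weights with the $|\kappa_{m,n}|^{-1}\sim|n|^{-1/2}$ decay exactly right — so as to secure both boundedness of $T$ and summability of the truncation tails — is the real bookkeeping; the remaining compact-operator and analyticity machinery is standard.
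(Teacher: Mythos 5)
Your route is the paper's own — the printed proof disposes of the lemma in one line (``$\mathfrak{C}$ is a limit of bounded finite rank operators; analyticity is manifest'') — and you are supplying the substance that line omits; in particular you are right that the point evaluation $y_{m,n}\mapsto y_{m,n}(0)$ is the real issue, tamed on the profile subspace where $|y_{m,n}(0)|=(\Re\kappa_{m,n})^{1/2}\|y_{m,n}\|_{L^2}$. However, one of your load-bearing claims is false: $\Re\kappa_{m,n}\asymp|\kappa_{m,n}|$ does \emph{not} hold ``along both lattice directions.'' For $m$ bounded and $n\to-\infty$ the quantity $w_{m,n}=\sigma+n\omega-im\lambda$ approaches the negative real axis from below, so $\arg\kappa_{m,n}\to-\pi/2$ and in fact $\Re\kappa_{m,n}=\bigl((|w_{m,n}|+\Re w_{m,n})/2\bigr)^{1/2}\approx|m\lambda-\Im\sigma|\,/\,(2\sqrt{|n|\omega})\to0$ while $|\kappa_{m,n}|\to\infty$. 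Your conclusion nevertheless survives, because the composition trace--shift--multiplier--embed actually produces the per-component factor $|\kappa_{m,n}|^{-1}\bigl(\Re\kappa_{m+1,n\pm1}/\Re\kappa_{m,n}\bigr)^{1/2}$ (times the harmless weight ratio), and what must be bounded is the ratio of $\Re\kappa$ at \emph{adjacent lattice sites}, not its comparability with $|\kappa|$; in the bad direction that ratio tends to $\bigl((m+1)\lambda+|\Im\sigma|\bigr)/\bigl(m\lambda+|\Im\sigma|\bigr)$ and is uniformly bounded for fixed $\sigma$ with $\Re(i\sigma)>0$. You should replace the false proviso by this ratio estimate — the regime where your claim fails is exactly the neighborhood of the branch cut of $\sqrt{\,\cdot\,}$, where carelessness would be punished.

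A second, smaller point: once $\mathfrak{C}$ is taken to act on the profile subspace, the equation $y=y_0+\mathfrak{C}y$ of (\ref{eq:vector2}) no longer lives there, since $y_0$ contains the convolution term $i\mathfrak{g}\psi_0$ and is not a pure profile. You need one sentence saying either that $y_0$ is componentwise continuous (so $\mathfrak{C}y_0$ is defined pointwise) and that one iterates once to reduce to $z=\mathfrak{C}y_0+\mathfrak{C}z$ with $z$ in the profile subspace, or that the domain of $\mathfrak{C}$ is the continuous elements of $\mathscr{H}_b$ with range automatically in the profile subspace. With these two repairs your argument is complete and is, in expanded form, exactly the paper's finite-rank-limit proof; the analyticity step (entrywise analyticity in $\sqrt{-ip}$ plus locally uniform norm convergence of the truncations) is fine as written.
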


\begin{proof} Compactness is clear since $\mathfrak{C}$ is a limit of bounded
finite rank operators. Analyticity is manifest in the expression
of $\mathfrak{C}$ (see (\ref{eq:defOfg}) and (\ref{eq:definitionOfC})).\end{proof}

\begin{Proposition} \label{prop:analyticityofy}
    Equation (\ref{eq:vector2}) has a unique solution iff the associated homogeneous equation
\begin{equation} \label{eq:vector3}
    y= \mathfrak{C}y
\end{equation}
has no nontrivial solution. In the latter case, the solution is analytic in $\sqrt{\sigma}$.
\end{Proposition}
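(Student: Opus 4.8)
The plan is to read (\ref{eq:vector2}) as the second-kind equation $(I-\mathfrak{C})y=y_{0}$ on the Hilbert space $\mathscr{H}_{b}$ and to invoke the Fredholm theory of compact operators, which Lemma \ref{remark:C} makes available. For the ``iff'' assertion I would apply the Fredholm alternative (Riesz--Schauder theory): since $\mathfrak{C}$ is compact, $I-\mathfrak{C}$ is Fredholm of index zero, so it is boundedly invertible precisely when it is injective, i.e. precisely when its kernel is trivial. The kernel being trivial is exactly the statement that the homogeneous equation (\ref{eq:vector3}) has only the zero solution. Hence (\ref{eq:vector2}) is uniquely solvable iff (\ref{eq:vector3}) has no nontrivial solution, and in that case the solution is $y=(I-\mathfrak{C})^{-1}y_{0}$.

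For the analyticity statement I would use the analytic Fredholm theorem (e.g. \cite{Reed-Simon}, Theorem VI.14). Lemma \ref{remark:C} already gives that $\mathfrak{C}$ is compact and depends analytically on $\sqrt{-ip}$; the remaining point is to recast this as analyticity in the single variable $s:=\sqrt{\sigma}$ on a suitable connected domain $D$. With $p=i\sigma+m\lambda+in\omega$, the $(m,n)$ component of $\mathfrak{C}$ enters only through $\kappa_{m,n}=\sqrt{\sigma+n\omega-im\lambda}=\sqrt{s^{2}+n\omega-im\lambda}$; for $(m,n)\neq(0,0)$ the argument does not vanish near $s=0$, so $\kappa_{m,n}$ is analytic in $s$ there, while the one component carrying a branch point at $\sigma=0$, namely $\kappa_{0,0}=\sqrt{\sigma}$, becomes analytic precisely upon introducing $s=\sqrt{\sigma}$. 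Thus $s\mapsto\mathfrak{C}(s)$ is, on $D$, an operator-norm analytic, compact-operator-valued map.

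I would then check that the inhomogeneity is analytic in the same variable: $y_{0}=i\mathfrak{g}\psi_{0}-h_{1}+\mathfrak{C}(h_{1})$, where $\mathfrak{g}$ and $\mathfrak{C}$ are analytic in $s$ by the argument above, and $h_{1}$ carries the entire factor $\mathcal{L}(\mathbf{1}_{[0,1]})(p)=(1-e^{-p})/p$, analytic in $p$ and hence in $s$; so $s\mapsto y_{0}(s)$ is analytic into $\mathscr{H}_{b}$. To exclude the degenerate alternative of the theorem (that $I-\mathfrak{C}(s)$ be invertible at no point of $D$), I would use Remark \ref{rmk:PropOfC}: since $\|\mathfrak{C}\|\to 0$ as $|p|\to\infty$, a Neumann series shows $(I-\mathfrak{C})^{-1}$ exists for large $|p|$, placing us in the nondegenerate case. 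The analytic Fredholm theorem then yields that $(I-\mathfrak{C}(s))^{-1}$ is meromorphic in $s$ on $D$, analytic off a discrete set $S$, with $s_{0}\in S$ exactly when $y=\mathfrak{C}(s_{0})y$ has a nontrivial solution. Consequently, wherever (\ref{eq:vector3}) has only the trivial solution, $y=(I-\mathfrak{C})^{-1}y_{0}$ is analytic in $s=\sqrt{\sigma}$, as claimed.

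The main obstacle I anticipate is not the abstract Fredholm machinery but the bookkeeping that makes it applicable: verifying that the branch choice fixed after (\ref{eq:defOfg}) (so that $\kappa_{m,n}$ lies in the fourth quadrant for $p\in\mathbb{H}$) is compatible with analyticity in $s$ on the relevant domain, and, more delicately, that the series defining $\mathfrak{C}$ converges in \emph{operator norm} uniformly on compact subdomains of $D$, so that the analyticity and compactness are genuinely of the operator-norm type the theorem requires, rather than merely componentwise. The weight $(1+|n|)^{3/2}e^{-b\sqrt{1+|m|}}$ in (\ref{eq:norm_in_y}) is exactly what should render these estimates uniform, and I would confirm this uniformity before quoting the analytic Fredholm theorem.
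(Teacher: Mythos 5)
Your proposal is correct and follows essentially the same route as the paper, which simply cites Lemma \ref{remark:C} (compactness and analyticity of $\mathfrak{C}$) together with the Fredholm alternative; your version fills in the details the paper leaves implicit, namely the analytic Fredholm theorem, the exclusion of the degenerate alternative via the decay of $\|\mathfrak{C}\|$ for large $|p|$, and the change of variable $s=\sqrt{\sigma}$ handling the branch point.
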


\begin{proof}
    This follows from Lemma \ref{remark:C} and the Fredholm alternative.
\end{proof}

When $m=0$, $n=0$, and $\sigma=0$, $\mathfrak{C}$ is singular, but
the solution is not. Indeed,  by adding $\mathbf{1}_{[-A,A]}$,
$A>0$, to both sides of (\ref{eq:inhomo2}) we get the equivalent
equation
\begin{multline} \label{eq:inhomo2mod}
        \left( \frac{\partial^2}{\partial x^2} - \sigma -n \omega +i m\lambda +\mathbf{1}_{[-A,A]} \right)y_{m,n}\\
        =\ i\psi_0 +\left(\mathbf{1}_{[-A,A]}- 2\delta(x)\right)y_{m,n} + \delta(x)\left(y_{m+1,n+1}+y_{m+1,n-1}\right)
\end{multline}
Arguments similar to those when $\mathbf{1}_{[-A,A]}$ is absent show that the operator $\mathfrak{C}$ associated to (\ref{eq:inhomo2mod}) is analytic in $\sqrt{\sigma}$, thus $y_{m,n}$ is analytic in $\sqrt{\sigma}$.

\subsection{Equation for $A$}

Componentwise (\ref{eq:vector21}) reads
\begin{equation}\label{eq:detailed}
  \begin{split} y_{m,n} &=
    \int_{\RR}\frac{1}{2}\kappa_{m,n}^{-1}e^{-\kappa_{m,n}|x-x'|}
    \psi_0(x') dx' \\
    &+ \frac{1}{2\kappa_{m,n}} e^{-\kappa_{m,n} |x| } \left[ 2 y_{m,n}(0) -
      \left(y_{m+1,n+1}(0)+y_{m+1,n-1}(0)\right)\right] \end{split}
\end{equation}
With $A_{m,n}=y_{m,n}(0)$, we have
\begin{equation} \label{eq:inhomoInA}
    (\sqrt{\sigma + n\omega - im\lambda}-1)A_{m,n}
    = -\frac{1}{2} A_{m+1,n+1} - \frac{1}{2} A_{m+1,n-1} + g_{m,n}
\end{equation}
where $g_{m,n}$ is defined in (\ref{eq:defOfgmn}).

\begin{Proposition} \label{prop:yIsDeterminedByA}
The solution to (\ref{eq:detailed}) is determined by the $A_{m,n}$ through
\begin{multline}\label{link1}
    y_{m,n}
    =\int_{\RR} \frac{1}{2}\kappa_{m,n}^{-1}e^{-\kappa_{m,n}|x-x'|} \psi_0(x') dx'+ e^{-\kappa_{m,n} |x| } A_{m,n} - \frac{1}{\kappa_{m,n}} e^{-\kappa_{m,n} |x| } g_{m,n}
\end{multline}

It thus suffices to study (\ref{eq:inhomoInA}).
\end{Proposition}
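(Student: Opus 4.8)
The plan is to treat (\ref{eq:detailed}) not as an equation to be solved for the full profiles $y_{m,n}(\cdot)$, but to observe that it already \emph{displays} each $y_{m,n}(x)$ explicitly in terms of only three numbers: the boundary values $y_{m,n}(0)$, $y_{m+1,n+1}(0)$ and $y_{m+1,n-1}(0)$. Consequently the entire vector field is recovered from the collection of scalars $A_{m,n}=y_{m,n}(0)$, and the only content left to extract is a closed relation among these scalars. First I would produce that relation by evaluating (\ref{eq:detailed}) at $x=0$: there $e^{-\kappa_{m,n}|x-x'|}$ reduces to $e^{-\kappa_{m,n}|x'|}$, so the integral term is, after identifying it with $g_{m,n}$ through the definition (\ref{eq:defOfgmn}), a known multiple of $g_{m,n}$. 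Multiplying the resulting identity by $\kappa_{m,n}$ and collecting the $A_{m,n}$ contributions yields precisely the recursion (\ref{eq:inhomoInA}); this is nothing but the self-consistency condition imposed on the boundary values at the origin.

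Next I would use this recursion to eliminate the off-diagonal boundary values from (\ref{eq:detailed}). Rewriting (\ref{eq:inhomoInA}) as $-\tfrac12\bigl(A_{m+1,n+1}+A_{m+1,n-1}\bigr)=(\kappa_{m,n}-1)A_{m,n}-g_{m,n}$, the boundary bracket appearing in (\ref{eq:detailed}) collapses to
\[
2A_{m,n}-\bigl(A_{m+1,n+1}+A_{m+1,n-1}\bigr)=2\kappa_{m,n}A_{m,n}-2g_{m,n}.
\]
Substituting this back and cancelling the prefactor $\tfrac{1}{2\kappa_{m,n}}e^{-\kappa_{m,n}|x|}$ against the two terms turns the boundary contribution into $e^{-\kappa_{m,n}|x|}A_{m,n}-\kappa_{m,n}^{-1}e^{-\kappa_{m,n}|x|}g_{m,n}$, which, together with the unchanged first integral, is exactly (\ref{link1}). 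At this point $y_{m,n}(x)$ is written purely in terms of the scalars $A_{m,n}$ and the known data $g_{m,n}$ and $\psi_0$, establishing the reconstruction formula.

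The genuinely load-bearing point is the claim that it then \emph{suffices} to study (\ref{eq:inhomoInA}). Here I would argue both directions of a bijection: the recursion is a self-contained linear system for the scalar unknowns alone, decoupled from the $x$-profiles; and conversely, given any solution $\{A_{m,n}\}$ of (\ref{eq:inhomoInA}), the field defined by the right-hand side of (\ref{link1}) has boundary values $y_{m,n}(0)=A_{m,n}$ and solves (\ref{eq:detailed}) — the reverse implication is exactly where the recursion is used, since it is what forces the boundary bracket to collapse as above. The main obstacle is not this algebra but the functional-analytic bookkeeping: one must verify that the weighted summability (\ref{eq:norm_in_A_inhomo}) imposed on $\{A_{m,n}\}$ corresponds to membership of the reconstructed $y$ in $\mathscr{H}_b$, so that the two notions of ``unique solution'' coincide, and then invoke the Fredholm uniqueness of Proposition \ref{prop:analyticityofy} for (\ref{eq:vector2}) to conclude that the correspondence between solutions of (\ref{eq:inhomoInA}) and solutions of (\ref{eq:detailed}) is indeed a bijection. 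Tracking the $\kappa_{m,n}^{-1}$ factors through the $L^2_x$ norms so as to align these two summability conditions is the step I expect to demand the most care.
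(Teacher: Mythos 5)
Your proposal is correct and follows essentially the same route as the paper: evaluate (\ref{eq:detailed}) at $x=0$ to obtain the recursion (\ref{eq:inhomoInA}), then substitute that recursion back into (\ref{eq:detailed}) to collapse the boundary bracket and obtain (\ref{link1}). Your additional remarks on the bijection and on matching the summability condition (\ref{eq:norm_in_A_inhomo}) with membership in $\mathscr{H}_b$ are sound but go beyond what the paper's proof records (the norm correspondence is handled separately in Remark \ref{rmk:spaceOfA}).
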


\begin{proof}
 Taking $x=0$ in (\ref{eq:detailed}) we obtain (\ref{eq:inhomoInA}); using now (\ref{eq:inhomoInA}) in (\ref{eq:detailed}) we have
\begin{multline}
    y_{m,n}=\int_{\RR} \frac{1}{2}\kappa_{m,n}^{-1}e^{-\kappa_{m,n}|x-x'|} \psi_0(x') dx'\\
    + \frac{1}{2\kappa_{m,n}} e^{-\kappa_{m,n} |x| } \left[ 2 A_{m,n}-\left( A_{m+1, +1}+A_{m+1,n-1} \right) \right]\\
    =\int_{\RR} \frac{1}{2}\kappa_{m,n}^{-1}e^{-\kappa_{m,n}|x-x'|} \psi_0(x') dx'+ e^{-\kappa_{m,n} |x| } A_{m,n} - \frac{1}{\kappa_{m,n}} e^{-\kappa_{m,n} |x| } g_{m,n}
\end{multline}
\end{proof}

\begin{Remark} \label{rmk:spaceOfA}
If $y\in\mathscr{H}_{b}$, then $A_{m,n}=y_{m,n}(0)$ satisfies
(\ref{eq:norm_in_A_inhomo}).
\end{Remark}
\z Let $A^{0}_{m,n}=y^{0}_{m,n}(0)$ where $y^{0}_{m,n}$ is a solution to  (\ref{eq:vector3}). The solution of (\ref{eq:vector3}) has the freedom of a multiplicative constant; we choose it by imposing
\begin{equation} \label{eq:choiceOfA000}
    A^{0}_{0,0}=\lim_{\sigma\to 1} (\sigma-1)A_{0,0}
\end{equation}
\z It is clear $A^{0}_{m,n}$ satisfies the homogeneous equation associated to (\ref{eq:inhomoInA})
\begin{equation} \label{eq:homoInA}
    (\sqrt{\sigma + n\omega - im\lambda}-1)A^{0}_{m,n}
    = -\frac{1}{2} A^{0}_{m+1,n+1} - \frac{1}{2} A^{0}_{m+1,n-1}
\end{equation}

\subsection{Positions and residues of the poles} \label{subsec:positionsAndStrengthOfThePoles}
Define
\begin{equation} \label{eq:defOfsigma0}
    \sigma_{0}=1-\Big\lfloor\frac{1}{\omega}\Big\rfloor \omega
\end{equation}
To simplify notation we take $\omega>1$ in which case $\sigma_{0}=1$. The general case is very similar.

Denote
\begin{equation} \label{eq:curl_B}
    \iB:=\{in\omega+m\lambda+i: m\in\ZZ, n\in\ZZ, m\leq0, |n|\leq|m| \}
\end{equation}

\begin{Proposition} \label{prop:A} The system (\ref{eq:homoInA}) has nontrivial
solutions in $\mathscr{H}_b$ iff $\sigma=\sigma_{0}$(=1 as discussed above). If $\sigma=1$, then the solution is
a constant multiple of the vector $A_{m,n}^0$ given by
\begin{equation} \label{eq:norm_in_A_inhomoSplit}
    \left\{ \begin{split} A^0_{m,n}&=0 &m\geq 0 \ \mbox{and}\ (m,n)\neq (0,0)\\
    A^0_{m,n}&=0 &m\leq 0 \,\mbox{and}\ m\leq n\leq -m\\\
    A^0_{m,n}&=1 &(m,n)=(0,0)\end{split} \right.
\end{equation}
and obtained inductively from (\ref{eq:homoInA}) for all other $(m,n)$.
(Note that $\sigma=1$ is used
crucially here since (\ref{eq:homoInA}) allows for the nonzero value
of $A^0_{0,0}$.)
\end{Proposition}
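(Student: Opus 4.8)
The plan is to read the homogeneous recurrence (\ref{eq:homoInA}) as a linear system that propagates in the level index $m$, exploiting that the prefactor $\sqrt{\sigma+n\omega-im\lambda}-1$ on the left vanishes exactly when $\sigma+n\omega=1$ and $m=0$. First I would record the elementary arithmetic fact that, with $\sigma$ in its fundamental domain and $\omega>1$, the only integer solution of $\sigma+n\omega=1$, $m=0$ is $(m,n)=(0,0)$ at $\sigma=1$, and that for every other lattice point, and for every $\sigma\neq\sigma_0$, the factor $\kappa_{m,n}-1$ is nonzero. This single observation is what singles out $\sigma=\sigma_0=1$ and drives the entire argument, so I would isolate it at the outset.

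The core of the proof is a rigidity statement in the $m\to+\infty$ direction. Rewriting (\ref{eq:homoInA}) as $A_{m,n}=-\tfrac12(\kappa_{m,n}-1)^{-1}(A_{m+1,n+1}+A_{m+1,n-1})$ shows that level $m$ is completely determined by level $m+1$; moreover $|\kappa_{m,n}|\geq(|m|\lambda)^{1/2}$ from the imaginary part, so passing downward from level $m+1$ to level $m$ contracts magnitudes by a factor $\sim(|m|\lambda)^{-1/2}$. Equivalently, a solution nonzero at some level $m_0\geq1$ forces the weighted level sums $\sum_n(1+|n|)^{3/2}|A_{m,n}|$ to grow, as $m$ increases, faster than $\prod_{m\geq m_0}(|m|\lambda)^{1/2}$, i.e. super-exponentially. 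Since the weight in (\ref{eq:norm_in_A_inhomo}) only tolerates growth $e^{b\sqrt{1+|m|}}$, this is incompatible with membership in $\mathscr{H}_b$, and I would conclude $A_{m,n}=0$ for all $m\geq1$. I expect this to be the main obstacle, because one must track the simultaneous spreading in $n$ (the $\pm1$ shifts) together with the growth in $m$ and carry the $(1+|n|)^{3/2}$ weights through the recursion, rather than analyze a single mode.

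Once $A_{m,n}=0$ for $m\geq1$, the level-zero equation collapses to $(\kappa_{0,n}-1)A_{0,n}=0$. If $\sigma\neq\sigma_0$, every factor $\kappa_{0,n}-1$ is nonzero, so $A_{0,\cdot}=0$ and the downward recursion gives $A\equiv0$: the kernel is trivial, which is the ``only if'' direction. If $\sigma=\sigma_0=1$, then $\kappa_{0,0}-1=0$ while $\kappa_{0,n}-1\neq0$ for $n\neq0$, leaving the single free value $A_{0,0}$; normalizing $A_{0,0}=1$ fixes the solution up to the asserted multiplicative constant. The negative levels are then uniquely determined by $A_{m,n}=-\tfrac12(\kappa_{m,n}-1)^{-1}(A_{m+1,n+1}+A_{m+1,n-1})$, and a direct induction shows the support at level $-M$ is $\{n:|n|\leq M,\ n\equiv M\ (\mathrm{mod}\ 2)\}$, so the solution is carried exactly by the cone $\mathcal{B}$ of (\ref{eq:curl_B}) and vanishes off it, yielding the explicit values in (\ref{eq:norm_in_A_inhomoSplit}).

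It remains to verify that the constructed vector genuinely lies in $\mathscr{H}_b$. The per-step contraction $(\kappa_{m,n}-1)^{-1}=O((|m|\lambda)^{-1/2})$ makes $|A_{m,n}|$ decay super-exponentially in $|m|$, while at level $-M$ there are at most $M+1$ nonzero components spread over $|n|\leq M$; I would bound $\sum_n(1+|n|)^{3/2}|A_{m,n}|$ by a quantity decaying faster than any power and then sum it against $e^{-b\sqrt{1+|m|}}$, which converges with room to spare. Combined with Lemma \ref{remark:C} and Proposition \ref{prop:analyticityofy}, this shows the constructed $A^0_{m,n}$ is the unique (up to scale) nontrivial kernel element in $\mathscr{H}_b$, existing precisely when $\sigma=\sigma_0$.
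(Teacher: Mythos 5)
Your argument is correct and follows essentially the same route as the paper: the same upward iteration of the recurrence (a nonzero entry at a point with $\kappa_{m,n}\neq 1$ forces an entry at level $m+1$ larger by a factor $\tfrac12|\kappa_{m,n}-1|\sim\sqrt{|m|\lambda}/2$, hence $\sqrt{m!}$-type growth incompatible with the weight in (\ref{eq:norm_in_A_inhomo})), the same identification of $(m,n)=(0,0)$, $\sigma=1$ as the only degenerate point, and the same factorial-decay estimate showing the constructed $A^0$ lies in the space. The only cosmetic difference is that you run the growth argument solely for levels $m\geq 1$ and then obtain the vanishing off the cone $\iB$ at $m\leq 0$ from the deterministic downward recursion, whereas the paper applies the growth inequality directly to any nonzero entry located in $\iB^{c}$ or occurring when $\sigma\neq 1$.
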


\begin{proof} Let $\sigma=1$. By construction,
$A^{0}$ defined in Proposition~\ref{prop:A} satisfies the recurrence and we only need to check (\ref{eq:norm_in_A_inhomo}). Since
$$A^0_{m,n} = -\frac{A^0_{m+1,n+1}+A^0_{m+1,n-1}}{2(\sqrt{\sigma+n\omega+im\lambda}-1) } $$
and  $\sqrt{\sigma+n\omega+im\lambda}-1 \neq 0$, we have
$$|A^0_{m,n}| \leq C \frac{2^m}{\sqrt{(|n|+|m|)!}}$$
proving the claim.

Now, for any $\sigma$, if there exists a nontrivial solution, then
for some $n_0,m_0$ we have $A^0_{n_0,m_0}\neq 0$. By
(\ref{eq:homoInA}), we have either
\begin{equation} \label{inequ:1}
    |A^0_{n_0-1,m_0+1}|\geq \frac{1}{2}\,|(\sqrt{\sigma+n_0\omega+im_0\lambda}-1)|\cdot |A^0_{n_0,m_0}|
\end{equation}
or
\begin{equation} \label{inequ:2}
    |A^0_{n_0+1,m_0+1}|\geq \frac{1}{2}\,|(\sqrt{\sigma+n_0\omega+im_0\lambda}-1)|\cdot |A^0_{n_0,m_0}|
\end{equation}

It is easy to see that if $in_0\omega+m_0\lambda+i\in\iB^c$ or
$\sigma\neq1$, the above inequalities lead to
\begin{equation}
  \label{eq:grwth}
  |A^0_{n,m_0+m}| \ge  c\,\sqrt{m!}
\end{equation}
for large $m>0$ (note that in these cases $\sqrt{\sigma+n\omega+i
m\lambda}-1\neq 0$), contradicting (\ref{eq:norm_in_A_inhomo}).

Finally, if $\sigma=1$, then $A^0$ is determined by $A^0_{0,0}$ via
the recurrence relation (\ref{eq:homoInA}) (note that
$A^0|_{\iB^c}=0$). This proves uniqueness (up to a constant
multiple) of the solution.
\end{proof}



\z Combining Proposition \ref{prop:analyticityofy} and Proposition \ref{prop:A} we obtain the following result.
\begin{Proposition}
    The solution $\hat{\psi}(p)$ to equation (\ref{eq:inhomo1}) is analytic with respect to $\sqrt{-ip}$, except for poles in $\iB$.
\end{Proposition}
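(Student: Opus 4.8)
The plan is to assemble the result directly from the two propositions just proved, treating it as a clean corollary that transfers the functional-analytic information about $y$ into the language of poles of $\hat{\psi}(p)$. First I would recall the dictionary of Remark~\ref{rmk:changeOfNotion}: a point $p$ in the right half plane is written as $p=i\sigma+m\lambda+in\omega$, so that fixing the ``fractional part'' $\sigma$ and letting $(m,n)$ range over $\mathbb{Z}^2$ sweeps out the lattice of $p$-values that the convolution equation (\ref{eq:inhomo1}) couples together. Under this identification, analyticity of $\hat{\psi}(p)$ in a neighborhood of a given $p_0=i\sigma_0+m_0\lambda+in_0\omega$ is equivalent to analyticity of the vector solution $y=(y_{m,n})$ of (\ref{eq:vector2}) in $\sigma$ near $\sigma_0$, and since all the coefficients $\kappa_{m,n}=\sqrt{\sigma+n\omega-im\lambda}$ enter only through $\sqrt{\sigma}$ up to the fixed shifts, the natural analytic variable is $\sqrt{-ip}=\sqrt{\sigma+n\omega-im\lambda}$.

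Next I would invoke Proposition~\ref{prop:analyticityofy}: by the Fredholm alternative for the compact, $\sqrt{-ip}$-analytic operator $\mathfrak{C}$ (Lemma~\ref{remark:C}), equation (\ref{eq:vector2}) has a unique solution, analytic in $\sqrt{\sigma}$, at every $\sigma$ for which the homogeneous equation (\ref{eq:vector3}) has only the trivial solution. The passage from $y=\mathfrak{C}y$ on the full vector to the scalar recurrence (\ref{eq:homoInA}) for $A^0_{m,n}=y^0_{m,n}(0)$ is supplied by Proposition~\ref{prop:yIsDeterminedByA} together with Remark~\ref{rmk:spaceOfA}, so a nontrivial solution of (\ref{eq:vector3}) in $\mathscr{H}_b$ exists if and only if (\ref{eq:homoInA}) has a nonzero solution satisfying the norm bound (\ref{eq:norm_in_A_inhomo}). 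Proposition~\ref{prop:A} then identifies exactly when that happens: only for $\sigma=\sigma_0=1$, with the explicit solution (\ref{eq:norm_in_A_inhomoSplit}) supported on $\iB$. Thus the Fredholm alternative fails precisely at the lattice points $p\in\iB$ defined in (\ref{eq:curl_B}), and at every other $p$ in the domain the solution exists, is unique, and is analytic in $\sqrt{-ip}$.

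The remaining point is to argue that at the exceptional values $p\in\iB$ the singularity of $\hat{\psi}$ is genuinely a pole rather than a branch point or essential singularity. Here I would use that $\mathfrak{C}$ is analytic in $\sqrt{-ip}$ and compact, so by analytic Fredholm theory the resolvent $(I-\mathfrak{C})^{-1}$ is a meromorphic operator-valued function of $\sqrt{-ip}$ on the relevant domain; its only singularities are poles, located exactly at the points where $I-\mathfrak{C}$ has nontrivial kernel, namely $\sigma=1$ (equivalently $p\in\iB$). Applying this to $y=(I-\mathfrak{C})^{-1}y_0$, and noting from (\ref{eq:asymptPropOfInhomo}) that the inhomogeneity $y_0$ is analytic and well-behaved, gives that $y$, hence $\hat{\psi}(p)$ via Proposition~\ref{prop:yIsDeterminedByA}, is meromorphic in $\sqrt{-ip}$ with poles only in $\iB$. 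I expect the main obstacle to be the bookkeeping at the degenerate point $(m,n,\sigma)=(0,0,1)$, where $\kappa_{0,0}-1$ vanishes and $\mathfrak{C}$ itself is singular: one must check that the regularization used in (\ref{eq:inhomo2mod}), adding $\mathbf{1}_{[-A,A]}$ to both sides, indeed restores analyticity of $\mathfrak{C}$ in $\sqrt{\sigma}$ there, so that the meromorphy conclusion and the pole (as opposed to worse singularity) structure survive uniformly across all of $\iB$.
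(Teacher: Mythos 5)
Your proposal is correct and follows essentially the same route as the paper: analyticity away from the exceptional set via the Fredholm alternative (Proposition \ref{prop:analyticityofy}), identification of the exceptional set with $\iB$ via Proposition \ref{prop:A}, and transfer to $\hat{\psi}$ via Proposition \ref{prop:yIsDeterminedByA}; your explicit appeal to the analytic Fredholm theorem to conclude that the singularities are genuinely poles is a point the paper leaves implicit, and is a welcome addition. One small correction: the point where the operator $\mathfrak{C}$ itself is singular, and where the regularization (\ref{eq:inhomo2mod}) is invoked, is $\sigma=0$ (i.e.\ $p=0$), where $\kappa_{0,0}=\sqrt{\sigma}$ vanishes and the Green's function $\tfrac{1}{2}\kappa_{0,0}^{-1}e^{-\kappa_{0,0}|x|}$ blows up; at $\sigma=1$ the operator $\mathfrak{C}$ is perfectly regular and the failure is that $I-\mathfrak{C}$ has nontrivial kernel, which is exactly the pole you are after.
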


\begin{proof}
    Proposition \ref{prop:A} shows that (\ref{eq:homoInA}) has a solution $A^{0}$ for $\sigma \in \iB$;  by Proposition \ref{prop:analyticityofy}, $A$ has singularities in $\iB$, and the conclusion follows from Proposition
    \ref{prop:yIsDeterminedByA}.
\end{proof}

So far we showed that the solution has possible singularities in $\iB$. To show that indeed $\hat\psi$ has poles for generic initial conditions, we need the following result:

\begin{Lemma} \label{lemma:betterFredholm}Let $H$ be a Hilbert space. Let $K(\sigma):H\rightarrow H$ be compact, analytic in $\sigma$ and invertible in $B(0,r)\setminus\{0\}$ for some $r>0$. Let $v_0(\sigma)\notin \rm{Ran}(I-K(0))$ be analytic in $\sigma$. If $v(\sigma)\in H$ solves the equation $(I-K(\sigma))v(\sigma)=v_0(\sigma)$, then $v(\sigma)$ is analytic in $\sigma$ in $B(0,r)\setminus\{0\}$ but singular at $\sigma=0$. \end{Lemma}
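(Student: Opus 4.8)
The plan is to split the statement into its two assertions — analyticity on the punctured disk and genuine singularity at the origin — and to treat them separately, the first by analytic Fredholm theory and the second by a short continuity-and-contradiction argument. Note first that the hypothesis $v_0(0)\notin\mathrm{Ran}(I-K(0))$ forces $\mathrm{Ran}(I-K(0))$ to be a proper subspace, so $I-K(0)$ is not surjective and hence (by the Fredholm alternative for $I$ minus a compact operator) not invertible; thus $\sigma=0$ is precisely a point where invertibility of $I-K(\sigma)$ fails, consistent with the standing assumption that $I-K(\sigma)$ is invertible only on $B(0,r)\setminus\{0\}$.

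First I would establish analyticity on $B(0,r)\setminus\{0\}$. Since $K(\sigma)$ is compact and analytic in $\sigma$, and $I-K(\sigma)$ is invertible for every $\sigma$ in the punctured disk, the operator-valued analytic (meromorphic) Fredholm theorem applies: the family $\sigma\mapsto(I-K(\sigma))^{-1}$ is meromorphic on $B(0,r)$, and in particular analytic on $B(0,r)\setminus\{0\}$, with its only possible pole in the disk located at $\sigma=0$. Writing $v(\sigma)=(I-K(\sigma))^{-1}v_0(\sigma)$ and using that $v_0$ is analytic, I conclude that $v(\sigma)$ is analytic on $B(0,r)\setminus\{0\}$, with at worst a pole at $0$.

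Next I would show that this singularity is not removable, arguing by contradiction. Suppose $v(\sigma)$ extended analytically — or merely continuously — across $0$, so that $v_*:=\lim_{\sigma\to0}v(\sigma)$ existed in $H$. Analyticity of $K$ gives $K(\sigma)\to K(0)$ in operator norm, and analyticity of $v_0$ gives $v_0(\sigma)\to v_0(0)$; passing to the limit in $(I-K(\sigma))v(\sigma)=v_0(\sigma)$ would then yield $(I-K(0))v_*=v_0(0)$, i.e. $v_0(0)\in\mathrm{Ran}(I-K(0))$, contradicting the hypothesis. Hence $0$ is not a removable singularity, and since by the previous step it is at worst a pole, it is a genuine pole: $\|v(\sigma)\|\to\infty$ as $\sigma\to0$.

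The main obstacle is securing the meromorphic structure in the first step: one must invoke the operator-valued analytic Fredholm theorem to guarantee that the failure of invertibility of $I-K(\sigma)$ produces only a pole of $(I-K(\sigma))^{-1}$ rather than an essential singularity. This is what upgrades the ``non-removable'' conclusion of the second step into an actual blow-up $\|v(\sigma)\|\to\infty$. By comparison, the limit passage in the second step is routine once norm-continuity of $K(\sigma)$ and continuity of $v_0(\sigma)$ at $0$ are noted.
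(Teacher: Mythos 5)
Your proposal is correct and follows essentially the same route as the paper: the analytic Fredholm alternative gives analyticity of $v(\sigma)=(I-K(\sigma))^{-1}v_0(\sigma)$ on the punctured disk, and passing to the limit in $(I-K(\sigma))v(\sigma)=v_0(\sigma)$ shows that analyticity (or even continuity) at $\sigma=0$ would force $v_0(0)\in\mathrm{Ran}(I-K(0))$, a contradiction. The only addition is your observation that the singularity is in fact a pole with $\|v(\sigma)\|\to\infty$, a mild strengthening the paper does not state.
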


\begin{proof} By the Fredholm alternative, $v(\sigma)$ is analytic when $\sigma\neq0$. If $v(\sigma)$ is analytic at $\sigma=0$ then $v_0$ is analytic and $v_0(\sigma)\in \rm{Ran}(I-K(0))$ which is a contradiction.
\end{proof}

The operator $\mathfrak{C}$ is compact by Remark \ref{remark:C}. The
inhomogeneity $y_0$ in equation (\ref{eq:vector3}) is analytic in
$\sqrt{\sigma}$. Furthermore, at $\sigma=1$,
$\rm{Ran}(I-\mathfrak{C})$ is of codimension 1 (Proposition
\ref{prop:A}). Combining with Lemma \ref{lemma:betterFredholm} we
have
\begin{Corollary} For a generic inhomogeneity $y_{0}$, $y(\sigma)$ is singular at $\sigma=1$.
Equivalently, $\hat{\psi}(p)$ has a pole at $p=i$.
\end{Corollary}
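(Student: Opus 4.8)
The plan is to recognize the Corollary as a direct application of the abstract Lemma \ref{lemma:betterFredholm}, once the ingredients assembled in the preceding results are matched to its hypotheses. First I would rewrite (\ref{eq:vector2}) as $(I-\mathfrak{C})y=y_0$ and introduce the local variable $s=\sigma-1$. Since $\sigma=1\neq0$, the map $\sigma\mapsto\sqrt{\sigma}$ is analytic in a neighborhood of $\sigma=1$, so the analyticity of $\mathfrak{C}$ in $\sqrt{\sigma}$ (Lemma \ref{remark:C}) and of the inhomogeneity $y_0$ in $\sqrt{\sigma}$ (established above) are each equivalent to analyticity in $s$ near $s=0$. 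I therefore set $H=\mathscr{H}_b$, $K(s)=\mathfrak{C}|_{\sigma=1+s}$, and $v_0(s)=y_0|_{\sigma=1+s}$, so that the singularity sought at $\sigma=1$ becomes the singularity at $s=0$ appearing in the lemma.

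Next I would verify the remaining hypotheses of Lemma \ref{lemma:betterFredholm}. Compactness of $K$ is Lemma \ref{remark:C}. Invertibility of $I-K(s)$ on a punctured disc $B(0,r)\setminus\{0\}$ follows from Proposition \ref{prop:A}, which shows the homogeneous equation $y=\mathfrak{C}y$ has nontrivial solutions in $\mathscr{H}_b$ only at $\sigma=1$; hence for $0<|s|<r$ the kernel of $I-K(s)$ is trivial, and the Fredholm alternative (index zero, $\mathfrak{C}$ compact) upgrades this to invertibility. Finally, Proposition \ref{prop:A} identifies the kernel at $\sigma=1$ as one-dimensional, spanned by $A^0$; since $\dim\ker(I-K(0))=\mathrm{codim}\,\mathrm{Ran}(I-K(0))$ for the identity minus a compact operator, $\mathrm{Ran}(I-K(0))$ has codimension one. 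The hypothesis $v_0(0)\notin\mathrm{Ran}(I-K(0))$ then excludes only a closed subspace of inhomogeneities of codimension one, which is precisely the meaning of ``generic $y_0$.'' With these in place, Lemma \ref{lemma:betterFredholm} yields that $y(\sigma)$ is analytic for $0<|s|<r$ and singular at $s=0$, i.e. at $\sigma=1$.

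It remains to translate this into the two conclusions. The identification (\ref{eq:yandpsi}) with $(m,n)=(0,0)$ sends $\sigma=1$ to $p=i\sigma=i$ and $y$ to $\hat{\psi}$, so the singularity of $y$ at $\sigma=1$ is a singularity of $\hat{\psi}(p)$ at $p=i$, which the full wave function inherits through Proposition \ref{prop:yIsDeterminedByA}. To sharpen ``singular'' to ``pole'' I would invoke the analytic (meromorphic) Fredholm theorem: for an analytic family of compact operators with $I-K$ invertible at some point, $(I-K(s))^{-1}$ is meromorphic with poles of finite rank. Hence $y$, and therefore $\hat{\psi}$, has a genuine pole at $\sigma=1$ (generically simple, consistent with the $e^{it}$ factor in Theorem \ref{thm:1}).

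The main obstacle I anticipate is the upgrade from ``singular'' to ``pole'': the abstract lemma delivers only a singularity, and one must appeal to the meromorphic Fredholm structure, rather than the bare Fredholm alternative, to rule out an essential singularity or branch point and guarantee finite order. Verifying that the family $K(s)$ genuinely satisfies the hypotheses of the meromorphic Fredholm theorem---analyticity in $s$ itself (not merely in $\sqrt{\sigma}$) and compactness uniformly near $s=0$---is where the care is needed, though all the required analyticity is already manifest from (\ref{eq:defOfg})--(\ref{eq:definitionOfC}).
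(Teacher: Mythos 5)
Your proposal is correct and follows essentially the same route as the paper: the paper's proof is exactly the one-paragraph application of Lemma \ref{lemma:betterFredholm}, using compactness of $\mathfrak{C}$, analyticity of $y_0$ in $\sqrt{\sigma}$, and the codimension-one range at $\sigma=1$ from Proposition \ref{prop:A}, with ``generic'' meaning $y_0\notin\mathrm{Ran}(I-\mathfrak{C})$. Your extra step upgrading ``singular'' to ``pole'' via the analytic Fredholm theorem is a legitimate refinement that the paper does not carry out here; it instead establishes the pole and its residue separately and more explicitly in Proposition \ref{prop:psiHasAPole}.
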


It can be shown that $\hat{\psi}(p)$ has a pole at $p=i$ for generic $\psi_0$. We prefer to show the following result which has a shorter proof.

\begin{Proposition} \label{prop:psiHasAPole}
    The residue $R_{0,0}$ of the pole for $\hat{\psi}$ at $p=i$ is given by
    \begin{equation} \label{eq:R00}
        R_{0,0} = \lim_{\sigma\to 1} (\sigma-1)A_{0,0}
        =\left[ -A_{1,-1}-A_{1,1}+2g_{0,0}\right]_{\sigma=1}
    \end{equation}
    In particular, $R_{0,0}\neq 0$ for large $\lambda$ and generic initial condition
    $\psi_{0}$.
\end{Proposition}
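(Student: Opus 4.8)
The plan is to extract the residue straight from the $(m,n)=(0,0)$ component of the recurrence (\ref{eq:inhomoInA}). Setting $m=n=0$ there, with $\sigma$ near $1$, it reads
\[
(\sqrt{\sigma}-1)\,A_{0,0}(\sigma)=-\tfrac12 A_{1,1}(\sigma)-\tfrac12 A_{1,-1}(\sigma)+g_{0,0}(\sigma).
\]
I would solve for $A_{0,0}$, multiply by $(\sigma-1)$, and use the elementary identity $(\sigma-1)/(\sqrt{\sigma}-1)=\sqrt{\sigma}+1\to2$ as $\sigma\to1$. Granting that $A_{1,\pm1}$ and $g_{0,0}$ are regular at $\sigma=1$, this gives at once
\[
R_{0,0}=\lim_{\sigma\to1}(\sigma-1)A_{0,0}=2\big[-\tfrac12 A_{1,1}-\tfrac12 A_{1,-1}+g_{0,0}\big]_{\sigma=1}=\big[-A_{1,1}-A_{1,-1}+2g_{0,0}\big]_{\sigma=1},
\]
which is precisely (\ref{eq:R00}); the term $g_{0,0}$ is manifestly analytic in $\sqrt{\sigma}$ by (\ref{eq:defOfgmn}).

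The one point needing justification is the regularity of $A_{1,\pm1}$ at $\sigma=1$, and I would argue it in a self-contained way as follows. Since $A$ has a simple pole there, the residue vector $R_{m,n}:=\lim_{\sigma\to1}(\sigma-1)A_{m,n}$ exists and lies in $\mathscr{H}_b$. Multiplying (\ref{eq:inhomoInA}) by $(\sigma-1)$ and letting $\sigma\to1$, the analytic sources $g_{m,n}$ are annihilated by the $(\sigma-1)$ factor, so $R$ solves the homogeneous recurrence (\ref{eq:homoInA}). By Proposition \ref{prop:A}, $R=c\,A^0$ for a scalar $c$; and by the explicit form (\ref{eq:norm_in_A_inhomoSplit}) one has $A^0_{1,1}=A^0_{1,-1}=0$ (indeed $A^0_{m,n}=0$ for every $m\geq0$ with $(m,n)\neq(0,0)$). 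Hence $R_{1,\pm1}=0$, i.e. $A_{1,\pm1}$ carry no pole at $\sigma=1$, which is exactly what the first paragraph requires.

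For the nonvanishing claim I would evaluate (\ref{eq:R00}) at $\sigma=1$ and show the coupling terms are negligible for large $\lambda$. From (\ref{eq:defOfgmn}), $g_{0,0}|_{\sigma=1}=\tfrac{i}{2}\int_{\RR}e^{-|x'|}\psi_0(x')\,dx'$ is an $O(1)$ quantity, nonzero off the single hyperplane $\{\psi_0:\int e^{-|x'|}\psi_0=0\}$, hence nonzero for generic $\psi_0$. In the $(1,\pm1)$ equation the prefactor $\sqrt{1\pm\omega-i\lambda}-1$ grows like $\sqrt{-i\lambda}=O(\sqrt{\lambda})$, while $g_{1,\pm1}=\tfrac{i}{2}\int e^{-\sqrt{1\pm\omega-i\lambda}|x'|}\psi_0\,dx'=O(\lambda^{-1/2})$ (the exponent has real part $\sim\sqrt{\lambda/2}$, so the integral concentrates near $x'=0$), and the feed-in terms at level $m=2$ are smaller still by the weighted bound (\ref{eq:norm_in_A_inhomo}). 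Thus $A_{1,\pm1}|_{\sigma=1}=O(\lambda^{-1})$, and $R_{0,0}=2\,g_{0,0}|_{\sigma=1}+O(\lambda^{-1})\neq0$ for all large $\lambda$ and generic $\psi_0$.

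I expect the main obstacle to be this last estimate: propagating the weighted bound (\ref{eq:norm_in_A_inhomo}) through the recurrence (\ref{eq:inhomoInA}) to obtain genuinely uniform control of $A_{1,\pm1}$ (and of the $m=2$ feed-in) as $\lambda\to\infty$, so that the $O(\lambda^{-1})$ remainder is legitimate. The residue formula itself is then the one-line limit of the first paragraph.
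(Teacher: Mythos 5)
Your proposal is correct and follows essentially the same route as the paper: extract the residue from the $(m,n)=(0,0)$ component of (\ref{eq:inhomoInA}) using $(\sigma-1)/(\sqrt{\sigma}-1)\to 2$, then show for large $\lambda$ that the large prefactor $\sqrt{1\pm\omega-i\lambda}-1$ in the $(1,\pm1)$ equations (iterated upward in $m$) forces $|A_{1,\pm1}|$ below $|g_{0,0}|$. Your added justifications --- the residue vector solving the homogeneous system so that $A_{1,\pm1}$ are regular at $\sigma=1$, and the quantitative $g_{1,\pm1}=O(\lambda^{-1/2})$ estimate --- are sound refinements of steps the paper treats more briefly.
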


\begin{proof}
    When $m=0$ and $n=0$ (\ref{eq:inhomoInA}) gives
    \begin{equation} \label{eq:proofUse1}
        (\sqrt{\sigma}-1)A_{0,0} = -\frac{1}{2}A_{1,-1}-\frac{1}{2}A_{1,1}+g_{0,0}
    \end{equation}
    Clearly $A_{0,0}$ is singular as $\sigma \to 1$, which implies that $\hat{\psi}$ has a pole at $p=i$ with residue given in (\ref{eq:R00}). Thus $R_{0,0}$ is not zero if the quantity $\left[ -A_{1,-1}-A_{1,1}+2g_{0,0}\right]_{\sigma=1}$ is not zero.
    First, $g_{0,0}|_{\sigma=1}$ is not zero by definition:
    $$g_{0,0}|_{\sigma=1} = i\, \int_{\RR} \frac{1}{2} e^{-|x'|} \psi_0(x')dx'$$

    Next, taking $m=1$, $n=1$, and $\sigma=1$ in (\ref{eq:inhomoInA}) we obtain
    $$(\sqrt{1+\omega-i\lambda}-1)A_{1,1} = \left[-\frac{1}{2}A_{2,2}-\frac{1}{2}A_{2,0}+g_{1,1}\right]_{\sigma=1}$$
    Thus for any $c>0$ when $\lambda$ is large enough we have
    $$|A_{1,1}|\le c^{-1}\left[|g_{1,1}|+\max\{|A_{2,2}|,|A_{2,0}|\}\right]_{\sigma=1}$$
  Estimating  similarly  $A_{2,2}$ and $A_{2,0}$ and so on, we
see that $|A_{1,1}|=O(c^{-1})$. When $c$ is large enough we have $|A_{1,1}|<|g_{0,0}|$. Analogous bounds hold for  $A_{1,-1}$ showing that $\left[ -A_{1,-1}-A_{1,1}+2g_{0,0}\right]_{\sigma=1}$ is not zero.

\end{proof}

\begin{Corollary} \label{cor:psiHasPolesInB}
    For generic initial condition $\hat{\psi}$ has simple poles in $\iB$, and their  residues are given by $R_{m,n}=A^{0}_{m,n}$.
\end{Corollary}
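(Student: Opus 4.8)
The plan is to propagate the pole already established at $(m,n)=(0,0)$ to the remaining indices by means of the recurrence (\ref{eq:inhomoInA}), reading off the residues from the resulting relations. The starting point is Proposition \ref{prop:psiHasAPole}, which shows that $A_{0,0}$ has a \emph{simple} pole at $\sigma=1$ with residue $R_{0,0}=\lim_{\sigma\to1}(\sigma-1)A_{0,0}$, nonzero for generic $\psi_0$. Two elementary facts drive the argument. First, the only index pair for which the coefficient $\sqrt{\sigma+n\omega-im\lambda}-1$ vanishes at $\sigma=1$ is $(m,n)=(0,0)$: indeed $\sqrt{1+n\omega-im\lambda}=1$ forces $n\omega=0$ and $m\lambda=0$, hence $m=n=0$; for every other index the coefficient, like $g_{m,n}$, is analytic and nonvanishing near $\sigma=1$. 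Second, by the preceding Proposition the function $\hat\psi$ is analytic off $\iB$, so $A_{m,n}=y_{m,n}(0)$ is analytic at $\sigma=1$ whenever $p=i+m\lambda+in\omega\notin\iB$; in particular $R_{m,n}:=\lim_{\sigma\to1}(\sigma-1)A_{m,n}$ vanishes for all $m\ge1$ and for $m=0$, $n\neq0$.

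First I would determine how the limits $R_{m,n}$ are linked. Multiplying (\ref{eq:inhomoInA}) by $(\sigma-1)$ and letting $\sigma\to1$, the inhomogeneity drops out since $(\sigma-1)g_{m,n}\to0$, and because the coefficient is analytic there we obtain, for every $(m,n)\neq(0,0)$,
\begin{equation} \label{eq:residueRecurrence}
    (\sqrt{1+n\omega-im\lambda}-1)\,R_{m,n} = -\tfrac{1}{2}R_{m+1,n+1}-\tfrac{1}{2}R_{m+1,n-1}.
\end{equation}
Thus $\{R_{m,n}\}$ satisfies exactly the homogeneous system (\ref{eq:homoInA}) at $\sigma=1$, with the normalization $R_{0,0}$ inherited from Proposition \ref{prop:psiHasAPole}; this is precisely the choice (\ref{eq:choiceOfA000}).

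Next I would close the identification by downward induction on $m$. Since the coefficient in (\ref{eq:residueRecurrence}) is nonzero for $(m,n)\neq(0,0)$, that relation determines level $m$ from level $m+1$, starting from the known level-$0$ data $R_{0,0}=A^0_{0,0}$ and $R_{0,n}=0$ ($n\neq0$), together with $R_{m,n}=0$ for $m\ge1$. As $\{A^0_{m,n}\}$ is generated from the identical recurrence with the identical level-$0$ data (Proposition \ref{prop:A}), the two solutions coincide, giving $R_{m,n}=A^0_{m,n}$ — a conclusion also available from the $\mathscr{H}_b$-uniqueness in Proposition \ref{prop:A} via Remark \ref{rmk:spaceOfA} and the growth dichotomy in its proof, the bounded residues excluding the $\sqrt{m!}$-branch. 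Simplicity then propagates along the same induction: $A_{0,0}$ has a simple pole, and dividing by the analytic nonvanishing coefficient in (\ref{eq:inhomoInA}) cannot raise the pole order, so each $A_{m,n}$ has at most a simple pole with residue $A^0_{m,n}$, which may vanish on the wrong-parity sites of $\iB$ (e.g. $A^0_{-1,0}=0$). Translating back through (\ref{link1}) of Proposition \ref{prop:yIsDeterminedByA}, the only singular contribution to $y_{m,n}(x,\sigma)$ at $\sigma=1$ is the term $e^{-\kappa_{m,n}|x|}A_{m,n}$ — the integral term, $\kappa_{m,n}$ and $g_{m,n}$ being analytic there — so $\hat\psi$ inherits simple poles in $\iB$, with the residues governed by $R_{m,n}=A^0_{m,n}$; genericity of $\psi_0$ makes $R_{0,0}\neq0$, hence $A^0\not\equiv0$ and the poles genuine.

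The main obstacle is the bookkeeping of singular parts: one must verify both that the residue system (\ref{eq:residueRecurrence}) closes — which hinges on $g_{m,n}$ and the coefficient being analytic, and the coefficient nonvanishing, at $\sigma=1$ away from $(0,0)$ — and that the pole order never exceeds one as the recurrence is iterated toward more negative $m$. The auxiliary membership $\{R_{m,n}\}\in\mathscr{H}_b$, should one prefer the uniqueness route of Proposition \ref{prop:A} over the direct induction, is the other delicate point, resolved by combining boundedness of the residues with the growth dichotomy already recorded in that proof.
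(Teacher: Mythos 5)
Your proposal is correct and follows essentially the same route as the paper: extract the homogeneous residue recurrence from the $A$-equation (the paper does this by integrating around a small loop, you by multiplying by $(\sigma-1)$ and passing to the limit — the same computation), fix the normalization $R_{0,0}=A^0_{0,0}$ via Proposition \ref{prop:psiHasAPole} and (\ref{eq:choiceOfA000}), and identify $R_{m,n}=A^0_{m,n}$ through Proposition \ref{prop:A}. You merely make explicit some steps the paper leaves implicit (simplicity of the poles by downward induction and the translation back to $\hat\psi$ via (\ref{link1})).
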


\begin{proof}
    Take a small loop around $\sigma=0$ and integrate equation (\ref{eq:detailed}) along it. This gives a relation among $R_{m,n}$ which is identical to (\ref{eq:homoInA}):
    \begin{equation}
        (\sqrt{\sigma + n\omega - im\lambda}-1)R_{m,n}
        = -\frac{1}{2} R_{m+1,n+1} - \frac{1}{2} R_{m+1,n-1}
    \end{equation}
    Proposition \ref{prop:psiHasAPole} and (\ref{eq:choiceOfA000}) implies that $R_{0,0}=A^{0}_{0,0}$. The rest of the proof is follows from Proposition \ref{prop:A}.
\end{proof}

\begin{Remark} It is easy to see that there exist initial conditions for which the solution has no poles. Indeed, if the solutions $\psi_1$ and $\psi_2$  have a simple pole at $p=i$ with residue $a_1$ and $a_2$ respectively, then for initial condition $\psi_{0,0}=a_2\psi_{1,0}-a_1\psi_{2,0}$, the corresponding solution $\psi_{0}$ has no pole at $p=i$.
\end{Remark}

\subsection{Infinite sum representation of $A_{m,n}$}  \label{subsec:sumRepOfAmn}
Taking $\sigma=1$ in   (\ref{eq:inhomoInA}) we get
\begin{equation} \label{eq:inhomoAsigmaIs1}
    (\sqrt{1+n\omega-im\lambda}-1)A_{m,n}=-\frac{1}{2}A_{m+1,n-1}-\frac{1}{2}A_{m+1,n+1}+g_{m,n}
\end{equation}

For $\tau=(a_1,...,a_N)\in \{-1,1\}^N$, we define
$\tau_j^0=(a_1,...,a_j,0,...,0)$. (Note that $\tau=\tau^0_N$). We
denote $\Sigma \tau^0_j=\sum_{i=1}^ja_i$ and $\{-1,1\}^0=\{0\}$.

Let
$$B_{m,n}=\frac{1}{\sqrt{1+n\omega-im\lambda}-1}$$ and for some $\tau \in \{-1,1\}^{N}$ define
$$B_{m\,n\,N} = B_{m\,n\,N}(\tau) = \prod_{j=0}^{N-1}B_{m+j,\,n+\sum\tau^0_j}$$
Equation (\ref{eq:inhomoAsigmaIs1}) implies

\begin{equation} \label{eq:inhomogeneousMainmn} \begin{split}
A_{m,n}=(-1)^N \frac{1}{2^N} \sum_{\tau\in\{-1,1\}^N}B_{m\,n\,N-1}\,A_{m+N,n+\sum\tau}\\
+\sum_{j=0}^{N-1}(-1)^j \frac{1}{2^j}\sum_{\tau\in\{-1,1\}^j}B_{m\,n\,j}\,g_{m+j,\,n+\sum\tau}\\
\end{split} \end{equation}

 As $N\rightarrow\infty$ we have
$$\prod_{j=0}^{N} B_{m+j,n}\sim \frac{1}{\sqrt{N!}}$$ and
$A_{m,n}$ goes to zero as $m\rightarrow\infty$, and thus we have
$$\lim_{N\rightarrow\infty}(-1)^N \frac{1}{2^N} \sum_{\tau\in\{-1,1\}^N}B_{m\,n\,N-1}\,A_{m+N,n+\sum\tau}=0$$
In  the limit $N\rightarrow\infty$ we obtain
\begin{equation}\label{eq:sumRep}
A_{m,n}=\sum_{j=0}^{\infty}(-1)^j
\frac{1}{2^j}\sum_{\tau\in\{-1,1\}^i}B_{m\,n\,j}\,g_{m+j,\,n+\sum\tau}\end{equation}

\begin{Remark} \label{remark:errorControl}
    Truncating the infinite
expansion to $N$, the error is bounded by
    \begin{equation} \label{eq:errorControl}
        \left| \frac{1}{2^N} \sum_{\tau\in\{-1,1\}^N}(\prod_{j=0}^{N}B_{m+j,\,n+\sum\tau^0_j}\,A_{m+N,n+\sum\tau}^0\big) \right|
    \end{equation}
\end{Remark}

\section{Proof of Theorem \ref{thm:1}} \label{sec:proofOfThm1}

In \S \ref{subsec:positionsAndStrengthOfThePoles} it was shown that for a generic initial condition $\psi_{0}(x)$, the solution $\hat{\psi}(x,p)$ has simple poles in $\iB$, with residues $R_{m,n}=A_{m,n}$.

Since $y\in\mathscr{H}_{b}$, the inverse Laplace transform can be expressed using Bromwich contour formula. Recall that $y$ differs from the original vector form of $\hat{\psi}$ by (\ref{eq:changeOfVariable}), we have
\begin{equation} \label{eq:inverse_transform_limit}
    \psi(x,t)=\mathcal{L}^{-1}\hat{\psi}(x,p)= \mathcal{L}^{-1}(h_{1}) +\frac{1}{2\pi i} \int_{c-i\infty}^{c+i\infty} e^{p\,t} \hat{\psi}_{1}(x,p) dp
\end{equation}

The fact that $y\in\mathscr{H}_{b}$ also implies that
$\hat{\psi}_{1}(x,p) \rightarrow 0$ fast enough as $p\to c\pm
i\infty$. Thus the contour of integration in the inverse Laplace
transform can be pushed into the left half $p$-plane, after collecting the residues. As a result, for some small $c<0$ the contour becomes one
coming from $c-i\infty$, joining $c-i\epsilon$, $0$, and
$c+i\epsilon$ (for arbitrarily small $\epsilon>0$) in this order,
then going towards $c+i\infty$.

Thus we have
\begin{equation} \label{eq:psiDecomposition}
    \begin{aligned}
        \psi(t,x) = \mathcal{L}^{-1}(h_{1}) + \mbox{Res}|_{p=i}(\,e^{p\,t} \hat{\psi}_{1}) \\
        + \frac{1}{2\pi i} e^{c t} \int_{\epsilon}^{\infty} e^{i\,s\,t} \left(\hat{\psi}_{1}(x,c+i\,s)+\hat{\psi}_{1}(x,c-i\,s) \right) ds \\
        +\frac{1}{2\pi i} \int_{0}^{c-i\epsilon}e^{p\,t} \hat{\psi}_{1}(x,p)dp +\frac{1}{2\pi i}\int_{0}^{c+i\epsilon}e^{p\,t} \hat{\psi}_{1}(x,p)dp\\
    \end{aligned}
\end{equation}

By Corollary \ref{cor:psiHasPolesInB} we have
$$\mbox{Res}|_{p=i}(\,e^{p\,t} \hat{\psi}_{1}) = R_{0,0} = A^{0}_{0,0}$$

The third term in (\ref{eq:psiDecomposition}) decays exponentially
for large $t$ (since the integral is bounded), while the last two
terms yield an asymptotic power series in $1/\sqrt{t}$, as easily
seen from Watson's Lemma.

Combining these results and the fact $\mathcal{L}^{-1}(h_{1}) = o(1/t)$ (Remark \ref{rmk:invLaplaceOfh1}) concludes the first part of Theorem \ref{thm:1}, with $r(\lambda,\omega)=R_{0,0}$. The rest follows from Proposition \ref{prop:psiHasAPole}.

\section{Proof of Theorem \ref{thm:3}} \label{sec:proofOfThm3}

When $\omega=0$, the equation $$i\, \frac{\partial\psi }{\partial t}\,
=\Big(-\frac{\partial^2}{\partial x^2}-2\delta(x) + 2\delta(x)\,e^{-\lambda
  t}\cos(\omega t)\Big)\, \psi$$ becomes $$i\, \frac{\partial\psi
}{\partial t}\, =\Big(-\frac{\partial^2}{\partial x^2}-2\delta(x) +
2\delta(x)\,e^{-\lambda t}\Big)\, \psi$$

Rewriting  $A_{m,n}$ and $g_{m,n}$ as $A_{n}$ and $g_{n}$, (\ref{eq:inhomoInA}) becomes
$$(\sqrt{\sigma-im\lambda}-1)A_n = -A_{n+1} + g_n$$

Since $\omega=0$, (\ref{eq:sumRep}) simplifies to
\begin{equation} \label{eq:inhomogeneousMainn}
    A_{n}=\sum_{l=0}^{\infty}(-1)^{l-1} \prod_{j=0}^l \frac{1}{\sqrt{1-i(n+j)\lambda}-1} g_{n+l}
\end{equation}

\subsection{Proof of Theorem \ref{thm:3}, (i)}
When $n=1$ (\ref{eq:inhomogeneousMainn}) becomes
\begin{equation} \label{eq:inhomogeneousMain1}
A_{1}=\sum_{k=1}^{\infty}(-1)^k \prod_{j=1}^k \frac{1}{\sqrt{1-i\,j \lambda}-1} g_{k}
\end{equation}

With the notation
$$h_k=\prod_{j=1}^k \big( \sqrt{1-i\,j \lambda}-1\big)$$
(\ref{eq:inhomogeneousMain1}) becomes
$$A_1 = \sum_{k=1}^{\infty} \frac{(-1)^k\,g_k}{h_k}$$
Let
$$h_k= e^{w_k} \,\sqrt{\lambda^{k-1}(k-1)!}$$
We have
$$w_{k+1} - w_k = \log( \sqrt{1-i k \lambda} -1 ) - \frac{1}{2}
\log (\lambda k)$$
Differentiating in $\lambda$ we obtain
$$\frac{d}{d\lambda} \big( w_{k+1} - w_k \big) = \frac{1}{2 \lambda\sqrt{1-i\,k\,\lambda}}$$
Let $u_{k}$ be so that
$$\frac{d}{d\lambda}u_k = \frac{d}{d\lambda}w_k - \frac{i\,\sqrt{-i\,k}}{\lambda^{3/2}}$$
Then,
\begin{equation} \label{eq:diff_in_u}
\frac{d}{d\lambda}(u_{k+1}-u_k) = - \frac{i\,\sqrt{-i\,k -
i}}{\lambda^{3/2}} + \frac{i\,\sqrt{-i\,k}}{\lambda^{3/2}} +
\frac{1}{2 \lambda\sqrt{1-i\,k\,\lambda}}
\end{equation}

By taking the inverse Laplace transform of (\ref{eq:diff_in_u}) in $k$ we get (we use $p$ as the transformed variable here)
\begin{equation}
  \label{eq:add1}
  (e^{-p}-1)\,\mathcal{L}^{-1} \frac{d}{d\lambda}u_k = \frac{\sqrt{i\,}(1-e^{-p}+e^{-ip/\lambda}\,p)} {2\,\sqrt{\pi}\,(p\lambda)^{3/2}}
\end{equation}
Integrating (\ref{eq:add1}) with respect to $\lambda$ gives
$$\mathcal{L}^{-1} u_k = \frac{\sqrt{i\,} \lambda \left( -2 + 2 e^{-p} - i^{3/2}\sqrt{p\,\pi\,\lambda}\,\rm{erf}\,(\frac{-i^{3/2}\sqrt{p}}{\sqrt{\lambda}}) \right)}{2\,(-1+e^{-p})\,\sqrt{\pi}\,(p\lambda)^{3/2}}$$

Thus
\begin{displaymath} \begin{split}
    \frac{1}{h_k}=& \frac{e^{-w_k}}{\sqrt{\lambda^{k-1}(k-1)!}} \\
    =& \exp \Big( -\int_{0}^{\infty} e^{-kp} \frac{\sqrt{i\,}\lambda \left( -2 + 2 e^{-p} -i^{3/2}\sqrt{p\,\pi\, \lambda}\,\mbox{erf}(\frac{-i^{3/2}\sqrt{p}} {\sqrt{\lambda}}) \right)} {2\,(-1+e^{-p})\,\sqrt{\pi}\,(p \lambda)^{3/2}}dp\Big)\\
    &\times\exp \Big( -\frac{2\,i\,\sqrt{-i\,k}}{\sqrt{\lambda}} \Big) \lambda^{\frac{1-k}{2}}\frac{1} {\sqrt{\Gamma(k)}}\\
\end{split} \end{displaymath}

Finally we obtain
\begin{displaymath} \begin{aligned}
A_1 &= \sum_{k=1}^{\infty}\frac{(-1)^k g_k}{h_k} = \mathcal{L}\sum_{k=1}^{\infty} (-1)^k \mathcal{L}^{-1}\left(\frac{g_k}{h_k}\right) \\
&= \int_{0}^{\infty} \sum_{k=1}^{\infty} (-1)^k e^{-kp} \mathcal{L}^{-1}\left(\frac{g_k}{h_k}\right) dp
= \int_{0}^{\infty} \frac{-e^{-p}}{1+e^{-p}} \mathcal{L}^{-1} \left(\frac{g_k}{h_k}\right) dp \\
&= \int_{0}^{\infty} \frac{-e^{-p}}{1+e^{-p}} \int_{c-i\infty}^{c+i\infty}g_k \exp \Big( -\frac{2\,i\,\sqrt{-i\,k}}{\sqrt{\lambda}} \Big)\lambda^{\frac{1-k}{2}} \frac{1}{\sqrt{\Gamma(k)}}\cdot \\
&\exp \Big( -\int_{0}^{\infty} e^{-kp} \frac{\sqrt{i\,} \lambda \left( -2+2 e^{-p}-i^{3/2}\sqrt{p\,\pi\,\lambda} \,\mbox{erf}(\frac{-i^{3/2}\sqrt{p}} {\sqrt{\lambda}}) \right)}{2\,(-1+e^{-p})\,\sqrt{\pi}\,(p\lambda)^{3/2}} dp \Big) dk dp\\
\end{aligned} \end{displaymath}

\subsection{Proof of Theorem \ref{thm:3}, (ii)} \label {sec:AsymptBehaviorOmega0}
Here we assume that  expansion of $A_{1}$ as $\lambda\to 0$ is invariant under
a $\frac{\pi}{2}$ rotation; that is, there are no Stokes lines
in the fourth quadrant; this would be ensured by Borel summability
of the expansion in $\lambda$.

Let $\lambda=ir$ with $r<0$, and for simplicity let $g\equiv1$, then (\ref{eq:inhomogeneousMain1}) implies

\begin{equation}\label{eq:inhomogeneousMain1_special}\begin{split}
A_1 &= \sum_{n=1}^{\infty} \prod_{k=1}^{n}\frac{(-1)^k}{\sqrt{1+kr}-1} = \sum_{n=1}^{\infty}\frac{\prod_{k=1}^{n}(\sqrt{1+kr} + 1)}{(-1)^k k! r^k}\\
&= \sum_{n=1}^{\infty} \frac{\exp(\sum_{k=1}^{n}\log(\sqrt{1+kr}+1))} {(-1)^k k! r^k}\\
\end{split}\end{equation}
The Euler-Maclaurin summation formula gives
\begin{equation}\label{eq:E_M_1}
    \begin{split}
        \sum_{k=1}^n \log(\sqrt{1+kr}+1) \sim \int_0^n \log(\sqrt{1+xr}+1)dx +C\\
        =\,-\frac{1}{r} + k\log(\sqrt{1+kr}+1) -\frac{1}{2}k+ \frac{\sqrt{1+kr}}{r}+C\\
    \end{split}
\end{equation}
where
$$C \sim \sum_{k=1}^{1/r} \log(\sqrt{1+kr}+1) - \int_0^{1/r}\log(\sqrt{1+xr}+1)dx \sim -\frac{\log(2)}{2}$$

Therefore
\begin{equation} \label{equ:A1_middle}
    A_1 \sim \sum_{k=1}^{\infty}\frac{\exp(-\frac{1}{r} + k\log(\sqrt{1+kr}+1) -\frac{1}{2}k +\frac{\sqrt{1+kr}}{r})}{(-1)^k k! r^k}
\end{equation}

Since
\begin{displaymath}
    \begin{split}
        &\frac{\exp(-\frac{1}{r}+k\log(\sqrt{1+kr}+1) -\frac{1}{2}k + \frac{\sqrt{1+kr}}{r})}{(-1)^k k! r^k} \\
        \sim& \exp\left({-\frac{3}{2r} + \frac{2}{3}\sqrt{r}(k+\frac{1}{r})^{(3/2)} -\frac{\log(2)}{2}-\frac{\log(\pi)}{2}+\frac{\log(-r)}{2}}\right)\\
    \end{split}
\end{displaymath}
applying the Euler-Maclaurin summation formula again gives
\begin{equation} \label{eq:A1Final}
    A_1 \sim \frac{2^{1/3}3^{1/6} \Gamma(\frac{2}{3})e^{-\frac{3i}{2\lambda}}(-i\lambda)^{1/6}}{2\sqrt{\pi}}
\end{equation}

\subsection{Numerical results} \label{subsec:numericalEvidence} Figure
\ref{fig:2} shows $\log(|R_0|)$ as a function of $\log(\lambda)$, very
nearly a straight line  with slope $1/6$ (corresponding to the $\lambda^{1/6}$
behavior), with good accuracy good even until $\lambda$ becomes as
large as 1.

\begin{figure}[ht!]
\includegraphics[scale=0.4]{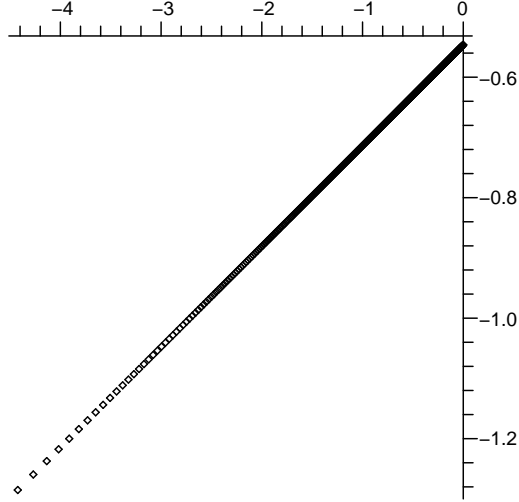}
\caption{Log-log plot of $|R_{0}|$ as a function of $\lambda$ for
  $\omega=0$. $R_{0,0}$ is the residue of the pole of
  $\hat{\psi}(p,x)$ at $p=i$, see Corollary \ref{cor:psiHasPolesInB}.}\label{fig:2}
\end{figure}

\section{Ionization rate under a short pulse} \label{sec:shortPulse}
We now consider a short pulse, with fixed total energy and fixed total number of oscillations.  The corresponding Schr\"odinger equation is
\begin{equation} \label{eq:shroedinger_short_pulse}
    i\, \frac{\partial\psi }{\partial t}\,=\Big(-\frac{\partial^2}{\partial x^2}-2\delta(x)+2\lambda\,\delta(x)\,e^{-\lambda t} \cos(\omega t)\Big)\, \psi
\end{equation}
where $\lambda$ is now a large real parameter (note the $\lambda$ in front of the exponential). We are interested in the ionization rate as $\lambda \rightarrow \infty$.

By similar arguments as in \S \ref{subsec:sumRepOfAmn} we have the convergent representation
\begin{equation} \label{eq:inhomogeneousMainmn_short_pulse}
    A_{m,n} = \sum_{i=0}^{\infty}(-1)^i \left(\frac{\lambda}{2}\right)^i \sum_{\tau\in2^i}\prod_{j=0}^i
B_{n+j,\,m+|\tau^0_j|}g_{n+i,\,m+|\tau|}
\end{equation}
Figs.  \ref{fig:5} and \ref{fig:6} give  $|R_{0,0}|$ (see Corollary \ref{cor:psiHasPolesInB}) and $\lambda$ under different $\omega/\lambda$ ratios. On  small scales on the $\lambda$ axis,  $|R_{0,0}|$  exhibits rapid oscillations,
easier seen  if $\omega/\lambda$ is smaller.
\begin{figure}[ht!]
\includegraphics[scale=0.6]{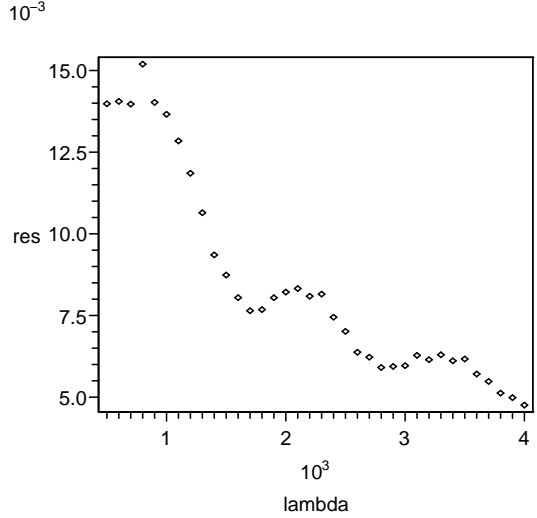}
\caption{$|R_{0,0}|$ as a function of $\lambda$, with fixed ratio $\omega/\lambda=20$.  $R_{0,0}$ is the residue of the pole of $\hat{\psi}(p,x)$ at $p=i$, see Corollary \ref{cor:psiHasPolesInB}.}\label{fig:5}
\end{figure}
\begin{figure}[ht!]
\includegraphics[scale=0.6]{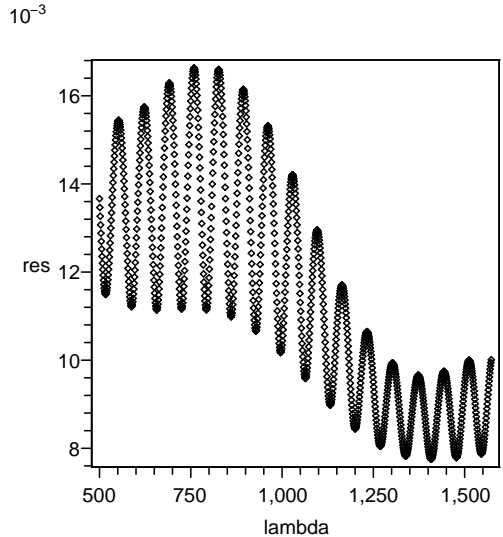}
\caption{$|R_{0,0}|$ as a function of $\lambda$ when $\omega/\lambda=5$. $R_{0,0}$ is the residue of the pole of $\hat{\psi}(p,x)$ at $p=i$, see Corollary \ref{cor:psiHasPolesInB}.}\label{fig:6}
\end{figure}

\section{Results for $\lambda=0$ and $\omega\neq0$} \label{sec:anotherSpecialCase}
We briefly go over the case $\lambda=0$, where ionization is complete; the full analysis is done in \cite{CRM}.
In this case $\hat{\psi}$ does not have poles on the imaginary line; we give a summary of the argument in \cite{CRM}.

The homogeneous equation now reads
\begin{equation} \label{eq:A01}
    \sqrt{\sigma + m\omega}\, A_m = -\frac{1}{2}A_{m+1} -\frac{1}{2}A_{m-1} + A_m
\end{equation}
Thus we have
$$\sum_{\NN}\sqrt{\sigma + m\omega}\, A_m \overline{A_m} =
-\frac{1}{2} \sum_{\NN} A_{m+1} \overline{A_m} - \frac{1}{2} \sum_{\NN} A_{m-1} \overline{A_m} + \sum_{\NN} A_m \overline{A_m}$$

The first sum and the second sum on the right hand side are conjugate to each other, and each term in the third sum is real. So the right hand side is real, thus the left hand side is also real.

For $\Im(\sigma)\neq 0$, $\Im\left(\sqrt{\sigma + m\omega}\, A_m \overline{A_m}\right)$ has same the sign as $\Im\sigma$. Therefore the sum can not be real and the equation has no nontrivial solution. When $\Im(\sigma)=0$, for $m<0$, all $\Im\left(\sqrt{\sigma+m\omega}\,A_m \overline{A_m}\right)$ have the same sign and  for $m\geq 0$, $\sqrt{\sigma+m\omega}\,A_m \overline{A_m}$ is real. Since
the final sum is purely real, this means $A_m=0$ for $m<0$. But then, recursively,  all $A_m$ should be 0.

Zero is thus the only solution to (\ref{eq:A01}).
 By the Fredholm alternative the solution $A$ is analytic in $\sqrt{\sigma}$ and thus the associated $y$  is analytic in $\sqrt{\sigma}$. This entails complete ionization.

\begin{figure}[ht!]
\includegraphics[scale=0.5]{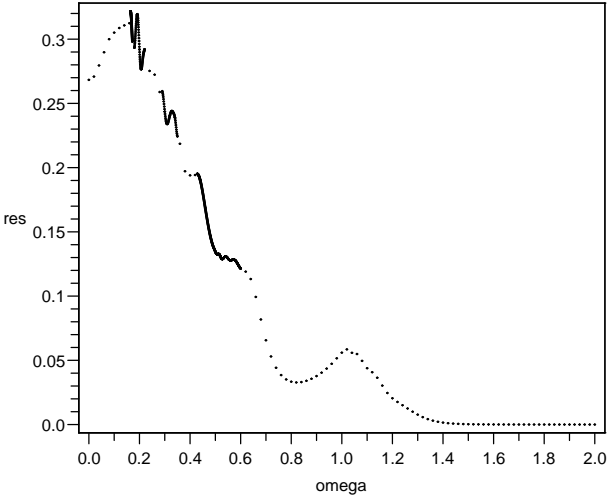}
\caption{$|R_{0,0}|$, at $\lambda=0.01$, as a function of $\omega$.  $R_{0,0}$ is the residue of the pole of $\hat{\psi}(p,x)$ at $p=i$, see Corollary \ref{cor:psiHasPolesInB}.}\label{fig:7}
\end{figure}
\begin{figure}[ht!]
\epsfig{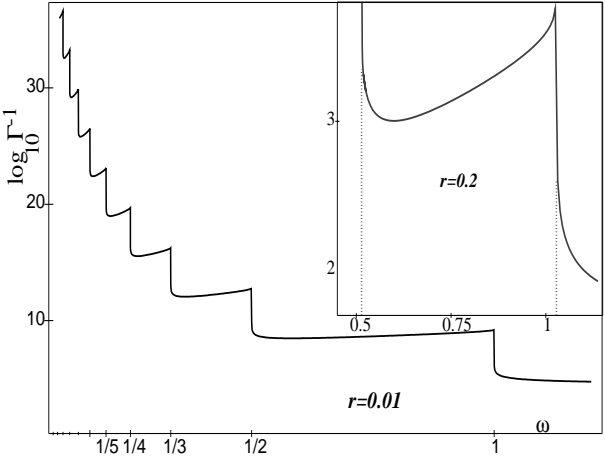}

\caption{$\log_{10}\Gamma^{-1}$, at $\lambda=0$,  as a function of $\omega/\omega_0$.  $\Gamma$
is the Fermi Golden Rule exponent for the probability decay
and $\hbar\omega_0=E_0$, the energy of the bound state of one delta function
of amplitude $r$ \cite{JPA}.}\label{fig:8}

\end{figure}

\subsection{Small $\lambda$ behavior.}
We expect that the behavior of the system at $\lambda=0$ is a limit of the one for small $\lambda$. However, this limit is very singular, as the density of the poles in the left half plane goes to infinity as $\lambda\to 0$, only to become finite for $\lambda=0$. Nonetheless, given a $\lambda$, small but not extremely small, formula (\ref{eq:sumRep}) allows us to calculate the residue.

Figure \ref{fig:7} shows the behavior of the residue versus $\omega$, for $\lambda=0.01$.

We show for comparison the corresponding result when $\lambda=0$, in Figure \ref{fig:8}:

\section{Acknowledgments.}
This work was supported in part by the National Science Foundation
 DMS-0601226 and  DMS-0600369.


\begin{thebibliography}{99}
\bibitem{Agmon} S Agmon {\em Spectral properties of Schr\"odinger
    operators and scattering theory}, Ann. Scuola. Norm. Sup. Pisa,
  Ser. IV {\bf 2}, pp. 151--218 (1975).


\bibitem{[30]} J Belissard, Stability and Instability in Quantum
Mechanics, in Trends and Developments in the Eighties (S Albeverio and
Ph. Blanchard, ed.) World Scientific, Singapore 1985, pp. 1--106.


  \bibitem{benderorszag} C Bender and S Orszag, {\em Advanced
Mathematical Methods for scientists and engineers}, McGraw-Hill, 1978, Springer-Verlag 1999.

\bibitem{[8]} C Cohen-Tannoudji, J  Duport-Roc and G Arynberg, {\it
    Atom-Photon Interactions}, Wiley (1992).

\bibitem{DMJ} O Costin On Borel summation and Stokes phenomena for rank one
  nonlinear systems of ODE's {\em Duke Math. J. Vol. 93, No 2: 289--344, 1998}


\bibitem{UAB} O Costin, R D Costin and J Lebowitz, ``Transition
  to the continuum of a particle in time-periodic potentials'' in
  Advances in Differential Equations and Mathematical Physics, AMS
  Contemporary Mathematics series ed. Karpeshina, Stolz, Weikard, and
  Zeng (2003).


\bibitem{JPA} O Costin, J Lebowitz and A Rokhlenko, {Exact Results for
   the Ionization of a Model Quantum System}
   J. Phys. A: Math. Gen.  33 pp. 1--9 (2000)

\bibitem{JStatPhys} O.  Costin, R. D. Costin and J. L. Lebowitz, {\em Time
    asymptotics of the Schr\"odinger wave function in time-periodic
    potentials, Dedicated to Elliott Lieb on the occasion of his 70th
    birthday}, J. Stat. Phys. {\bf 1--4} 283-310 (2004).$^1$


\bibitem{Hydrogen} O. Costin, J. L. Lebowitz and S. Tanveer {\em On the
    ionization problem for the Hydrogen atom in time dependent fields}
(submitted), also available at http://www.math.ohio-state.edu/$\sim$tanveer

  \bibitem{CMP} O Costin, R D Costin, J Lebowitz and A Rokhlenko ,
    {Evolution of a model quantum system under time periodic forcing:
      conditions for complete ionization} Comm. Math. Phys.  221, 1 pp 1--26
    (2001).


\bibitem{CRM} O Costin, A Rokhlenko and J Lebowitz, {On the complete ionization of a periodically perturbed  quantum system}
    CRM  Proceedings and Lecture Notes 27 pp 51--61 (2001).



  \bibitem{CMP224} O Costin and A Soffer, {Resonance Theory for
      Schr\"odinger Operators}  Commun. Math. Phys. 224 (2001).




\bibitem{Rokhl} O.          Costin, J. L. Lebowitz, A. Rokhlenko,
{\em Exact results for the ionization of a model quantum system.}
J. Phys. A: Math. Gen. 33 6311--6319 (2000).$^1$


\bibitem{genpaper} O Costin, R D Costin, J L Lebowitz (in preparation).

\bibitem{[33]} O Costin, R D Costin,  Rigorous WKB for discrete
schemes with smooth coefficients,  SIAM J. Math. Anal. {\bf 27},
    no. 1,  110--134  (1996).

\bibitem{[2]} H L Cycon, R G Froese, W Kirsch and B Simon, {\it
Schr{\"o}dinger Operators}, Springer-Verlag (1987).


\bibitem{Ecalle-book} J \'Ecalle, {\em Fonctions
Resurgentes, Publications Mathematiques D'Orsay, 1981}

\bibitem{Ecalle}
 J \'Ecalle, {\em in Bifurcations and periodic orbits of
vector fields, NATO ASI Series, Vol. 408,  1993}

\bibitem{YajPriv} A Galtbayar, A Jensen and K Yajima, {Local
time-decay of solutions to Schr\"odinger equations with time-periodic
potentials} (J. Stat. Phys., to appear).
\bibitem{Hormander} L H\"ormander, {\em Linear partial differential operators},
 Springer (1963).
\bibitem{Howland} J S Howland, Stationary scattering theory for time dependent
Hamiltonians. Math. Ann. {\bf 207}, 315--335 (1974).
\bibitem{[31]} H R Jauslin and J L Lebowitz, Spectral and Stability
Aspects of Quantum Chaos, Chaos {\bf 1}, 114--121 (1991).

\bibitem{Kato} T Kato, Perturbation Theory for Linear Operators, Springer Verlag (1995).
\bibitem{[6a]} P D Miller, A Soffer and M I Weinstein, {\em
    Metastability of Breather Modes of Time Dependent Potentials},
  Nonlinearity Volume 13 (2000) 507-568.
\bibitem{Miranda} C Miranda, {\em Partial differential equations of
    elliptic type}, Springer-Verlag (1970).
  \bibitem{Reed-Simon} M Reed and B Simon, {\it Methods of Modern
    Mathematical Physics} (Academic Press, New York, 1972).


  \bibitem {RCL}A Rokhlenko, O Costin, J L Lebowitz, {Decay versus
      survival of a local state subjected to harmonic forcing: exact
      results.}  J.  Phys. A: Mathematical and General {\bf 35} pp 8943
      (2002).


\bibitem{SZ} S Saks and A Zygmund, Analytic Functions,
  Warszawa-Wroclaw (1952).

\bibitem{[1]}  B Simon, Schr{\"o}dinger Operators in the Twentieth
Century, Jour. Math. Phys. {\bf 41}, 3523 (2000).
\bibitem{[6]}  A Soffer and M I Weinstein, Nonautonomous
Hamiltonians,
Jour. Stat. Phys. {\bf 93},
359--391 (1998).
\bibitem{Treves} F Treves, {\em  Basic linear partial differential equations},
Academic Press (1975).
\bibitem{Wasow} W Wasow, {\em {Asymptotic expansions
for ordinary differential equations}}, Interscience Publishers (1968).


\bibitem{Yaj1} K Yajima, {Scattering theory for Schr\"odinger equations
with potentials periodic in time}, J. Math. Soc. Japan {\bf 29} pp 729
(1977)

\bibitem{Yaj2} K Yajima, Existence of solutions of Schr\"odinger
evolution equations, Commun. Math. Phys. 110 pp 415 (1987).

\bibitem{Zemanian} A H Zemanian, {\em Distribution theory and transform
  analysis}, McGraw-Hill New York (1965).

\bibitem{BK53} O.Costin {\em Asymptotics and Borel summability},  CRC PRESS Boca Raton London New York Washington, D.C.

\end{thebibliography}
\end{document}